\newtheorem{problem}{Problem}
\newtheorem{definition}{Definition}
\newtheorem{lemma}{Lemma}
\newtheorem{theorem}{Theorem}
\newtheorem{example}{Example}
\let\oldnl\nl
\newcommand{\nonl}{\renewcommand{\nl}{\let\nl\oldnl}}
\newcommand{\hide}[1]{}
\newcommand{\tensor}[1]{\boldsymbol{\mathscr{#1}}}   
\newcommand{\mat}[1]{\mathbf{#1}}
\newcommand{\TR}{\tensor{R}}
\newcommand{\TB}{\tensor{B}}
\newcommand{\MA}{\mat{A}}
\newcommand{\bit}{\begin{compactitem}}
	\newcommand{\eit}{\end{compactitem}}
\newcommand{\ben}{\begin{compactenum}}
	\newcommand{\een}{\end{compactenum}}
\newcommand{\method}{\textsc{D-Cube}\xspace}
\newcommand{\mzoom}{\textsc{M-Zoom}\xspace}
\newcommand{\mbiz}{\textsc{M-Biz}\xspace}
\newcommand{\cross}{\textsc{CrossSpot}\xspace}
\newcommand{\fraudar}{\textsc{Fraudar}\xspace}
\newcommand{\densealert}{\textsc{DenseStream}\xspace}
\newcommand{\densestream}{\textsc{DenseAlert}\xspace}
\newcommand{\mass}[1]{M_{#1}}
\newcommand{\density}{\rho}
\newcommand{\densityarinoarg}{\rho_{ari}}
\newcommand{\densitygeonoarg}{\rho_{geo}}
\newcommand{\densitysuspnoarg}{\rho_{susp}}
\newcommand{\densitysurpnoarg}{\rho_{es(\alpha)}}
\newcommand{\densitysurpalpha}[1]{\rho_{es(#1)}}
\newcommand{\densityno}[2]{\rho(#1,#2)}
\newcommand{\densityari}[2]{\rho_{ari}(#1,#2)}
\newcommand{\densitygeo}[2]{\rho_{geo}(#1,#2)}
\newcommand{\densitysusp}[2]{\rho_{susp}(#1,#2)}
\newcommand{\densitysurp}[2]{\rho_{es(\alpha)}(#1,#2)}
\newcommand{\cmark}{\ding{51}}%
\newcommand{\mapreduce}{\textsc{MapReduce}\xspace}
\newcommand{\hadoop}{\textsc{Hadoop}\xspace}
\newcommand{\bcomment}[1]{{\scriptsize \textcolor{black}{\hfill $\vartriangleright$ {\it #1}}}}
\newcommand{\change}{\textcolor{black}}
\begin{document}
	\title{Detecting Group Anomalies in Tera-Scale Multi-Aspect Data \\ via Dense-Subtensor Mining\footnote{The content of the manuscript has been presented in part at the 10th ACM International Conference on Web Search and Data Mining, \cite{shin2017dcube}. In this extended version, we refined  \method with a new parameter $\theta$, and we proved that the time complexity of \method is significantly improved with the refinement (Lemma~\ref{lemma:epsilon} and Theorem~\ref{thm:time:worst}). 	We also proved that, for $N$-way tensors, \method gives an $\theta N$-approximation guarantee for Problem~\ref{defn:problem} (Theorem~\ref{thm:accuracy:guarantee}).
			Additionally, we considered an extra density measure (Definition~\ref{defn:density:susp}) and \change{an extra competitor (i.e., \mbiz);} and we applied \method to three more real-world datasets (i.e., KoWiki, EnWiki, and SWM) and successfully detected edit wars, bot activities, and spam reviews (Tables~\ref{tab:spam}, \ref{tab:blocks:summary}, and \ref{tab:bot}).
			Lastly, we conducted experiments showing the effects of parameters $\theta$ and $\alpha$ on the speed and accuracy of \method \change{in dense-subtensor detection} (Figures~\ref{fig:tradeoff:epsilon} and \ref{fig:tradeoff:alpha}).
			Most of this work was also included in the PhD thesis of Kijung Shin.}}
	\author{Kijung Shin\,$^{1}$, Bryan Hooi\,$^{2}$, Jisu Kim\,$^{3}$, and Christos Faloutsos\,$^{4}$}
	\date{\normalsize $^{1}$ Graduate School of AI and School of Electrical Engineering, KAIST, Daejeon, South Korea, \\ 
		$^{2}$ School of Computing and Institute of Data Science, National University of Singapore, Singapore, \\
		$^{3}$ DataShape, Inria Saclay, Palaiseau, France, \\
		$^{4}$ School of Computer Science, Carnegie Mellon University, Pittsburgh, PA, USA, \\
		kijungs@kaist.ac.kr, bhooi@comp.nus.edu.sg, jisu.kim@inria.fr, christos@cs.cmu.edu
	}
	
	\maketitle
	
	\begin{abstract}
		How can we detect fraudulent lockstep behavior in large-scale multi-aspect data (i.e., tensors)?
		Can we detect it when data are too large to fit in memory or even on a disk?
		Past studies have shown that dense subtensors in real-world tensors (e.g., social media, Wikipedia, TCP dumps, etc.) signal anomalous or fraudulent behavior such as retweet boosting, bot activities, and network attacks.
		Thus, various approaches, including tensor decomposition and search, have been proposed for detecting dense subtensors rapidly and accurately.
		However, existing methods suffer from low accuracy, or they assume that tensors are small enough to fit in main memory, which is unrealistic in many real-world applications such as social media and web.
		
		To overcome these limitations, we propose \method, a disk-based dense-subtensor detection method, which also can run in a distributed manner across multiple machines.
		Compared to state-of-the-art methods, 
		\method is (1) {\bf Memory Efficient}: requires up to {\it 1,561$\times$ less memory} and handles {\it 1,000$\times$ larger} data ({\it 2.6TB}), (2) {\bf Fast}: up to {\it 7$\times$ faster} due to its near-linear scalability, (3) {\bf Provably Accurate}: gives a guarantee on the densities of the detected subtensors, and (4) {\bf Effective}: spotted network attacks from TCP dumps and synchronized behavior in rating data most accurately.
		
		\vspace{1mm}
		\noindent\textbf{Keywords:} Tensor, Dense Subtensor, Anomaly Detection, Fraud Detection, Out-of-core Algorithm, Distributed Algorithm
		
	\end{abstract}

\section{Introduction}
\label{sec:intro}
\begin{figure*}[t]
	\centering
	\centering
	\includegraphics[width=0.8\linewidth]{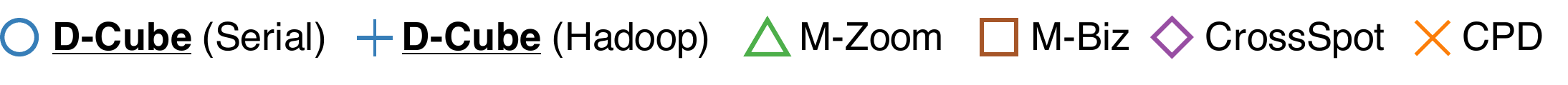} \\
	\begin{tabular}{ccl}
		\begin{minipage}{.2\textwidth}
			\center
			\includegraphics[width=\linewidth]{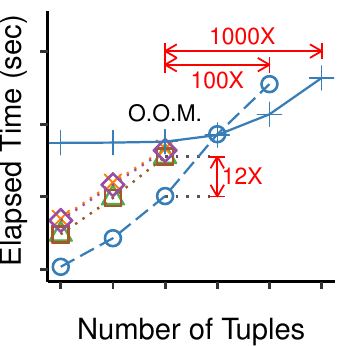}
		\end{minipage} 
		& \begin{minipage}{.2\textwidth}
			\center
			\includegraphics[width=\linewidth]{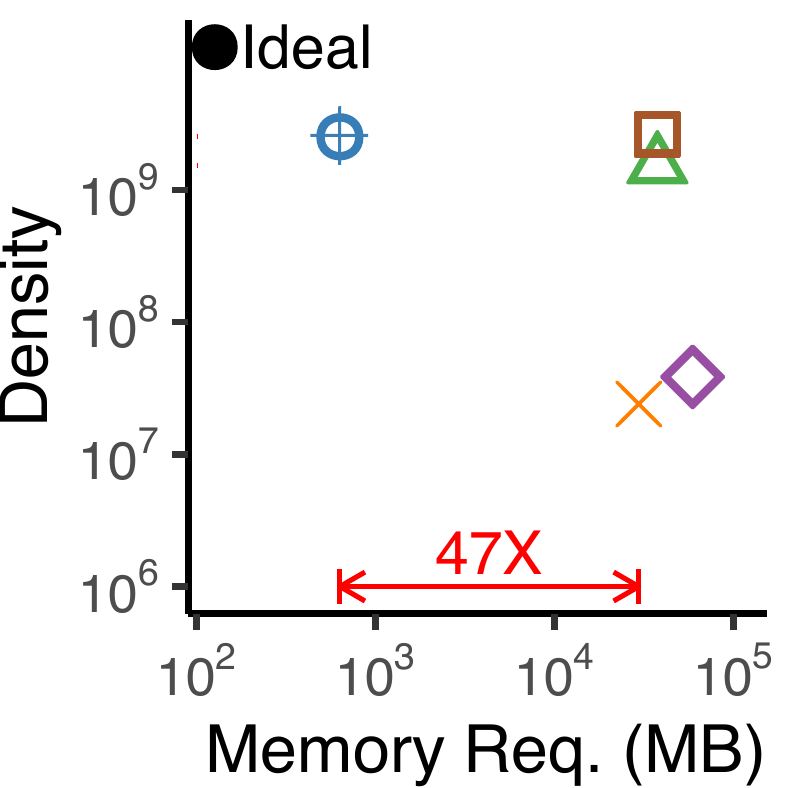}
		\end{minipage}
		& \begin{minipage}{.5\textwidth}
			\scalebox{0.79}{
				\begin{tabular}{cc|cccc}
					\toprule
					Dataset & Order & Volume & Mass & Attack Ratio & Attack Type \\
					\midrule
					\multirow{5}{*}{DARPA}& 1 & 738 & 1.52M  & {\bf 100\%} & Neptune\\
					& 2 & 522 & 614K & {\bf 100\%} & Neptune\\
					& 3 & 402 & 113K & {\bf 100\%} & Smurf\\
					& 4 & 1 & 10.8K & {\bf 100\%} & Satan\\
					& 5 & 156K & 560K & 30.4\% & SNMP\\
					\midrule
					\multirow{5}{*}{AirForce} & 1 & 1 & 1.93M  & {\bf 100\%} & Smurf\\
					& 2 & 8 & 2.53M & {\bf 100\%} &  Smurf\\
					& 3 & 6,160 & 897K & {\bf 100\%} & Neptune\\
					& 4 & 63.5K& 1.02M & 94.7\% & Neptune\\
					& 5 & 930K & 1.00M & 94.7\% & Neptune\\
					\bottomrule
				\end{tabular}
			}
		\end{minipage} \vspace{1mm} \\
		A. Data scalability & B. Accuracy and & \qquad C. Network intrusion detection using \method \\
		& memory requirements & (Top-$5$ subtensors detected by \method in TCP dumps) \\
	\end{tabular}
	\caption{\label{fig:crown}
		\change{{\bf Strengths of \method.} `O.O.M' stands for `out of memory'.
			A. {\bf Fast \& Scalable}: \method was {\it 12$\times$ faster} and successfully handled 
			{\it 1,000$\times$ larger data (2.6TB)} than its best competitors.
			B. {\bf Efficient \& Accurate}: \method required {\it 47$\times$ less memory}
			and found subtensors as dense as those found by its best competitors from English Wikipedia revision history.
			C. {\bf Effective}: \method accurately spotted network attacks from TCP dumps.
			See Section~\ref{sec:experiments} for the detailed experimental settings.}
	}
\end{figure*}

Given a tensor that is too large to fit in memory, how can we detect dense subtensors? Especially, can we spot dense subtensors without sacrificing speed and accuracy provided by in-memory algorithms?

A common application of this problem is review fraud detection, where we aim to spot suspicious lockstep behavior among groups of fraudulent user accounts who review suspiciously similar sets of products. 
Previous work \cite{shin2018fast,maruhashi2011multiaspectforensics,jiang2015general} has shown the benefit of incorporating extra information, such as timestamps, ratings, and review keywords, by modeling review data as a tensor. 
Tensors allow us to consider additional dimensions in order to identify suspicious behavior of interest more accurately and specifically. That is, extraordinarily dense subtensors indicate groups of users with lockstep behaviors both in the products they review and along the additional dimensions (e.g., multiple users reviewing the same products at the exact same time). 

In addition to review-fraud detection, spotting dense subtensors has been found effective for many anomaly-detection tasks. Examples include network-intrusion detection in TCP dumps \cite{shin2018fast,maruhashi2011multiaspectforensics}, retweet-boosting detection in online social networks \cite{jiang2015general}, bot-activity detection in Wikipedia \cite{shin2018fast}, and genetics applications \cite{saha2010dense,maruhashi2011multiaspectforensics}.

Due to these wide applications, several methods have been proposed for rapid and accurate dense-subtensor detection, and search-based methods have shown the best performance.
Specifically, search-based methods \cite{shin2018fast,jiang2015general} outperform methods based on tensor decomposition, such as CP Decomposition and HOSVD \cite{maruhashi2011multiaspectforensics}, in terms of accuracy and flexibility with regard to the choice of density metrics. Moreover, the latest search-based methods \cite{shin2018fast} provide a guarantee on the densities of the subtensors it finds, while methods based on tensor decomposition do not.

However, existing search methods for dense-subtensor detection assume that input tensors are small enough to fit in memory. 
Moreover, they are not directly applicable to tensors stored in disk since using them for such tensors incurs too many disk I/Os due to their highly iterative nature.
However, real applications, such as social media and web, often involve disk-resident tensors with terabytes or even petabytes, which in-memory algorithms cannot handle. This leaves a growing gap that needs to be filled.

\begin{table}[t]
	\centering{
		\caption{\label{tab:compare} Comparison of \method and state-of-the-art dense-subtensor detection methods. `\cmark' denotes `supported'.} 
		\begin{tabular}{c|ccccc|c}
			\toprule
			\ &  \rotatebox[origin=l]{90}{\parbox{2.5cm}{\mzoom \& \\ \mbiz \cite{shin2018fast}}} &  
			\rotatebox[origin=l]{90}{\parbox{2.8cm}{\densealert \& \densestream \cite{shin2017densealert}}} &  
			\rotatebox[origin=l]{90}{\parbox{3cm}{\cross \cite{jiang2015general}}}
			& \rotatebox[origin=l]{90}{\parbox{2cm}{MAF \cite{maruhashi2011multiaspectforensics}}}
			& \rotatebox[origin=l]{90}{\parbox{3cm}{\fraudar  \cite{hooi2017graph}}}  & \rotatebox[origin=l]{90}{\parbox{3cm}{\bf \method \\ (Proposed)}} \\
			\midrule
			High-order Tensors  & \cmark & \cmark & \cmark & \cmark  & & {\large \cmark} \\
			Flexibility in Density Measures  & \cmark & & \cmark & & \cmark & {\large \cmark} \\
			Accuracy Guarantees & \cmark & \cmark & & & \cmark & {\large \cmark} \\
			Out-of-core Computation &  & &  & & & {\large \cmark}  \\
			Distributed Computation  &  & & & & & {\large \cmark} \\
			\bottomrule
		\end{tabular}
	}
\end{table}

\subsection{Our Contributions}

To overcome these limitations, we propose \method 
a dense-subtensor detection method for disk-resident tensors. \method works under the W-Stream model \cite{ruhl2003efficient}, where data are only sequentially read and written during computation. As seen in Table \ref{tab:compare}, only \method supports out-of-core computation, which allows it to process data too large to fit in main memory. 
\method is optimized for this setting by carefully minimizing the amount of disk I/O and the number of steps requiring disk accesses, without losing accuracy guarantees it provides.
Moreover, we present a distributed version of \method using the \mapreduce framework \cite{dean2008mapreduce}, specifically its open source implementation \hadoop. 

The main strengths of \method are summarized as follows:
\begin{itemize}
	\item {\bf Memory Efficient:} 
	\method requires up to {\it 1,561$\times$} less memory and successfully handles {\it 1,000$\times$} larger data ({\it 2.6TB}) than its best competitors (Figures~\ref{fig:crown}A and \ref{fig:crown}B).
	\item {\bf Fast:} \method detects dense subtensors up to {\it 7$\times$} faster in real-world tensors and {\it 12 $\times$} faster in synthetic tensors than than its best competitors due to its near-linear scalability with all aspects of tensors (Figure~\ref{fig:crown}A).
	\item {\bf Provably Accurate:}  \method provides a guarantee on the densities of the subtensors it finds (Theorem~\ref{thm:accuracy:guarantee}), and it shows similar or higher accuracy in dense-subtensor detection than its best competitors on real-world tensors  (Figure~\ref{fig:crown}B).
	\item {\bf Effective:} \method successfully spotted network attacks from TCP dumps, and lockstep behavior in rating data, with the highest accuracy (Figure~\ref{fig:crown}C).
\end{itemize}

\noindent{\bf Reproducibility:} The code and data used in the paper are available at {\bf \url{http://dmlab.kaist.ac.kr/dcube}}.

\subsection{Related Work}
\label{sec:related}
We discuss previous work on (a) dense-subgraph detection, (b) dense-subtensor detection, (c) large-scale tensor decomposition, and (d) other anomaly/fraud detection methods.

{\bf Dense Subgraph Detection.} Dense-subgraph detection in graphs has been extensively studied in theory; see \cite{lee2010survey} for a survey. Exact algorithms \cite{goldberg1984finding,khuller2009finding} and approximate algorithms \cite{charikar2000greedy,khuller2009finding} have been proposed for finding subgraphs with maximum average degree. These have been extended for incorporating size restrictions \cite{andersen2009finding}, alternative metrics for denser subgraphs \cite{tsourakakis2013denser}, evolving graphs \cite{epasto2015efficient}, subgraphs with limited overlap \cite{balalau2015finding,galbrun2016top}, and streaming or distributed settings \cite{bahmani2012densest,bahmani2014efficient}.
Dense subgraph detection has been applied to fraud detection in social or review networks \cite{jiang2014catchsync,beutel2013copycatch,shah2014spotting,hooi2017graph,shin2016corescope}.

{\bf Dense Subtensor Detection.} Extending dense subgraph detection to tensors \cite{jiang2015general,shin2017densealert,shin2018fast} incorporates additional dimensions, such as time, to identify dense regions of interest with greater accuracy and specificity. 
\cross \cite{jiang2015general}, which starts from a seed subtensor and adjusts it in a greedy way until it reaches a local optimum, shows high accuracy in practice but does not provide any theoretical guarantees on its running time and accuracy.
\mzoom \cite{shin2018fast}, which starts from the entire tensor and only shrinks it by removing attributes one by one in a greedy way, improves \cross in terms of speed and approximation guarantees.
\change{\mbiz \cite{shin2018fast} starts from the output of \mzoom and repeats adding or removing an attribute greedily until a local optimum is reached.}
Given a dynamic tensor, \densestream and \densealert incrementally compute a single dense subtensor in it \cite{shin2017densealert}.
\cross, \mzoom, \change{\mbiz}, and \densestream require all tuples of relations to be loaded into memory at once and to be randomly accessed, which limit their applicability to large-scale datasets. 
\densealert maintains only the tuples created within a time window, and thus it can find a dense subtensor only within the  window.
Dense-subtensor detection in tensors has been found useful for detecting retweet boosting \cite{jiang2015general}, network attacks \cite{shin2018fast,shin2017densealert,maruhashi2011multiaspectforensics}, bot activities \cite{shin2018fast}, and vandalism on Wikipedia \cite{shin2017densealert}, and also for genetics applications \cite{saha2010dense,maruhashi2011multiaspectforensics}. 

{\bf Large-Scale Tensor Decomposition.} Tensor decomposition such as HOSVD and CP decomposition \cite{kolda2009tensor} can be used to spot dense subtensors \cite{maruhashi2011multiaspectforensics}. 
Scalable algorithms for tensor decomposition have been developed, including disk-based algorithms \cite{oh2017shot,shin2014distributed}, distributed algorithms \cite{kang2012gigatensor,shin2014distributed,jeon2015haten2}, and approximate algorithms based on sampling \cite{papalexakis2012parcube} and count-min sketch \cite{wang2015fast}.
However, dense-subtensor detection based on tensor decomposition has serious limitations: it usually detects subtensors with significantly lower density (see Section~\ref{sec:experiments:speed}) than search-based methods, provides no flexibility with regard to the choice of density metric, and does not provide any approximation guarantee. 

{\bf Other Anomaly/Fraud Detection Methods.}
In addition to dense-subtensor detection, many approaches, including those based on egonet features \cite{akoglu2010oddball}, coreness \cite{shin2016corescope}, and behavior models \cite{rossi2013modeling}, have been used for anomaly and fraud detection in graphs.
See \cite{akoglu2015graph} for a survey.

\subsection{Organization of the Paper}
In Section~\ref{sec:prelim}, we provide notations and a formal problem definition.
In Section~\ref{sec:method}, we propose \method, a disk-based dense-subtensor detection method.
In Section~\ref{sec:experiments}, we present experimental results and discuss them.
In Section~\ref{sec:conclusion}, we offer conclusions.

\section{Preliminaries and Problem Definition}
\label{sec:prelim}

\begin{table}[!t]
	\centering
	\caption{Table of symbols.}
	\begin{tabular}{r|l}
		\toprule
		\textbf{Symbol} & \textbf{\qquad\qquad\qquad Definition} \\
		\midrule
		$\TR(A_{1},\cdots,A_{N},X)$ & relation representing an $N$-way tensor\\
		$N$ & number of the dimension attributes in $\TR$\\
		$A_{n}$ & $n$-th dimension attribute in $\TR$ \\
		$X$ & measure attribute in $\TR$ \\
		$t[A_{n}]$ (or $t[X]$)& value of attribute $A_{n}$ (or $X$) in tuple $t$ in $\TR$ \\
		$\TB$ & a subtensor in $\TR$ \\
		$\densityno{\TB}{\TR}$ & density of subtensor $\TB$ in $\TR$ \\
		$\TR_{n}$ (or $\TB_{n}$) & set of distinct values of $A_{n}$ in $\TR$ (or $\TB$)\\
		$\mass{\TR}$ (or $\mass{\TB}$) & mass of $\TR$ (or $\TB$) \\ 
		$\TB(a,n)$ & set of tuples with attribute $A_{n}= a$ in $\TB$ \\
		$\mass{\TB(a,n)}$ & attribute-value mass of $a$ in $A_{n}$\\ 
		$k$ & number of subtensors we aim to find \\
		$\theta$ & mass-threshold parameter in \method \\
		$[x]$ & $\{1,2,\cdots,x\}$ \\
		\bottomrule
	\end{tabular}
	\label{tab:symbols}
\end{table}

In this section, we first introduce notations and concepts used in the paper.
Then, we define density measures and the problem of top-$k$ dense-subtensor detection. \\

\subsection{Notations and Concepts}
\label{sec:prelim:notation}
Table~\ref{tab:symbols} lists the symbols frequently used in the paper.
We use $[x]=\{1,2,\cdots,x\}$ for brevity.
Let $\TR(A_{1},\cdots,A_{N},X)$ be a relation with $N$ dimension attributes, denoted by $A_{1},\cdots,A_{N}$, and a nonnegative measure attribute, denoted by $X$ (see Example~\ref{example} for a running example).
For each tuple $t\in\TR$ and for each $n\in[N]$, $t[A_{n}]$ and $t[X]$ indicate the values of $A_{n}$ and $X$, resp., in $t$.
For each $n\in[N]$, we use $\TR_{n}=\{t[A_{n}]:t\in \TR\}$ to denote the set of distinct values of $A_{n}$ in $\TR$. 
The relation $\TR$ is naturally represented as an $N$-way tensor of size $|\TR_{1}|\times\cdots\times|\TR_{N}|$.
The value of each entry in the tensor is $t[X]$, if the corresponding tuple $t$ exists, and $0$ otherwise.
Let $\TB_{n}$ be a subset of $\TR_{n}$.
Then, a {\it subtensor} $\TB$ in $\TR$ is defined as $\TB(A_{1},...,A_{N}, X)=\{t\in\TR : \forall n \in [N], t[A_{n}]\in \TB_{n}\}$, the set of tuples where each attribute $A_{n}$ has a value in $\TB_{n}$.
The relation $\TB$ is a `subtensor' because it forms a subtensor of size $|\TB_{1}|\times\cdots\times|\TB_{N}|$ in the tensor representation of $\TR$, as in Figure~\ref{fig:example}B.
We define the mass of $\TR$ as $\mass{\TR}=\sum_{t\in\TR}t[X]$, the sum of attribute $X$ in the tuples of $\TR$.
We denote the set of tuples of $\TB$ whose attribute $A_{n}=a$ by $\TB(a,n)=\{t\in\TB : t[A_{n}] = a\}$ and its mass, called the {\it attribute-value mass of $a$ in $A_{n}$}, by $\mass{\TB(a,n)}=\sum_{t\in\TB(a,n)}t[X]$.

\begin{figure}[t]
	\centering
	\begin{tabular}{cc}
		\begin{minipage}{.5\textwidth}
			\center
			\includegraphics[width=0.8\linewidth]{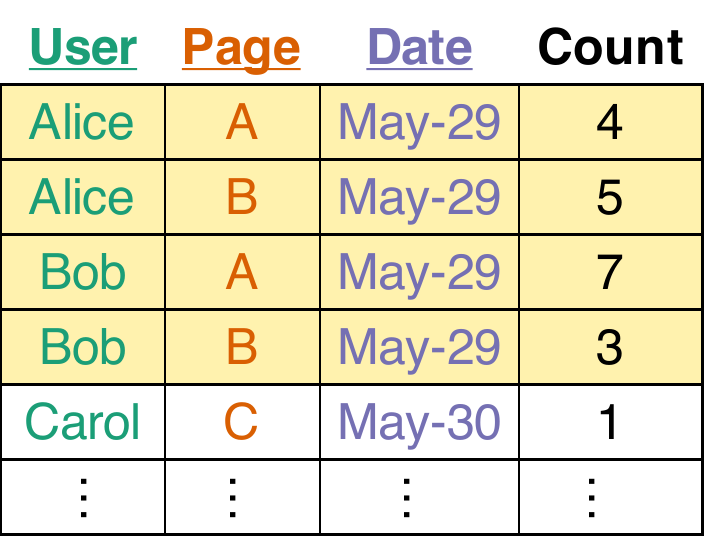}
		\end{minipage} 
		& \begin{minipage}{.5\textwidth}
			\center
			\includegraphics[width=\linewidth]{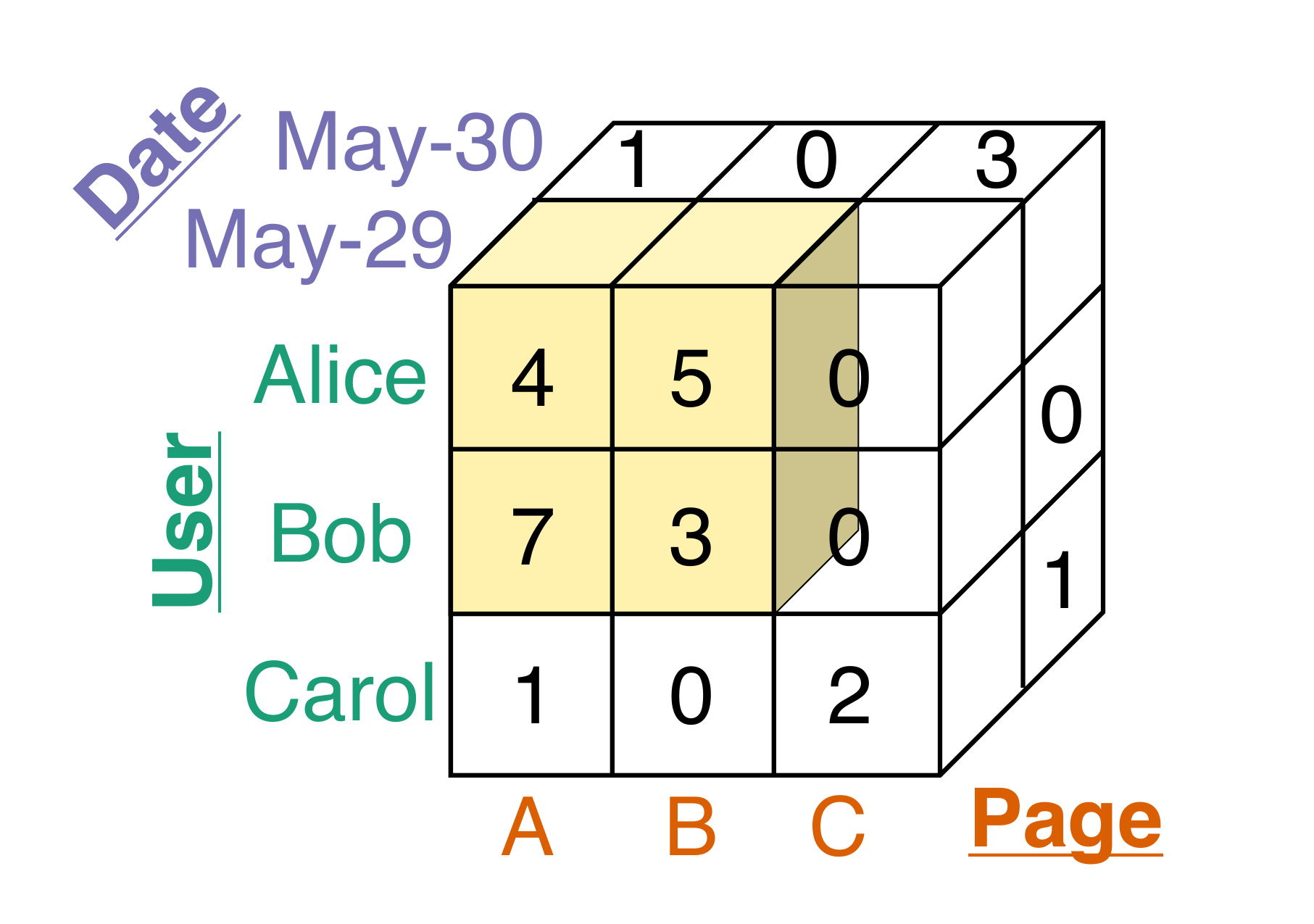}
		\end{minipage} \vspace{1mm} \\
		A. Relation $\TR$ & B. Tensor Representation of $\TR$ \\
	\end{tabular}
	\caption{\label{fig:example}
		Pictorial description of Example~\ref{example}. A. Relation $\TR$ where the colored tuples compose relation $\TB$. B. Tensor representation of $\TR$ where the relation $\TB$ forms a subtensor.
	}
\end{figure}

\begin{example}[Wikipedia Revision History] \label{example} As in Figure~\ref{fig:example},
	assume a relation $\TR(\underline{user}, \underline{\smash{page}}, \underline{date}, count)$, where each tuple $(u, p, d, c)$ in $\TR$ indicates that user $u$ revised page $p$, $c$ times, on date $d$.
	The first three attributes, $A_{1}$=$user$, $A_{2}$=$page$,
	and $A_{3}$=$date$, are dimension attributes, and the other one, $X$=$count$, is the
	measure attribute.
	Let $\TB_{1}$=$\{Alice, Bob\}$, $\TB_{2}$=$\{A,B\}$, and $\TB_{3}$=$\{May$-$29\}$.
	Then, $\TB$ is the set of tuples regarding the revision of page A or B by Alice or Bob on May-29, and its mass $\mass{\TB}$ is $19$, the total number of such revisions.
	The attribute-value mass of Alice (i.e., $\mass{\TB(Alice, 1)}$)  is $9$, the number of revisions on A or B by exactly Alice on May-29. 
	In the tensor representation, $\TB$ composes a subtensor in $\TR$, as depicted in Figure~\ref{fig:example}B.
\end{example}

\subsection{Density Measures}
\label{sec:prelim:density}

We present density measures proven useful for anomaly detection in past studies.
We use them throughout the paper although our dense-subtensor detection method, explained in Section~\ref{sec:method}, is flexible and not restricted to specific measures. 
Below, we slightly abuse notations to emphasize that the density measures are the functions of $\mass{\TB}$, $\{|\TB_{n}|\}_{n\text{=}1}^{N}$, $\mass{\TR}$, and $\{|\TR_{n}|\}_{n\text{=}1}^{N}$, where $\TB$ is a subtensor of a relation $\TR$.

Arithmetic Average Mass (Definition~\ref{defn:density:ari}) and Geometric Average Mass (Definition~\ref{defn:density:geo}), which were used for detecting network intrusions and bot activities \cite{shin2018fast}, are the extensions of density measures widely-used for graphs \cite{charikar2000greedy,kannan1999analyzing}.

\begin{definition}[Arithmetic Average Mass $\densityarinoarg$ \cite{shin2018fast}] \label{defn:density:ari} The {\it arithmetic average mass} of a subtensor $\TB$ of a relation $\TR$ is defined as 
	\begin{equation*}
	\densityari{\TB}{\TR}\text{\normalfont=}\densityarinoarg(\mass{\TB}, \{|\TB_{n}|\}_{n\text{=}1}^{N}, \mass{\TR}, \{|\TR_{n}|\}_{n\text{=}1}^{N})=\frac{\mass{\TB}}{\frac{1}{N}\sum_{n\text{=}1}^{N}|\TB_{n}|}.
	\end{equation*}
\end{definition}
\begin{definition}[Geometric Average Mass $\densitygeonoarg$ \cite{shin2018fast}] The {\it geometric average mass} of a subtensor $\TB$ of a relation $\TR$ is defined as  \label{defn:density:geo}
	\begin{equation*}
	\densitygeo{\TB}{\TR}\text{\normalfont=}\densitygeonoarg(\mass{\TB}, \{|\TB_{n}|\}_{n\text{=}1}^{N}, \mass{\TR}, \{|\TR_{n}|\}_{n\text{=}1}^{N})=\frac{\mass{\TB}}{(\prod_{n\text{=}1}^{N}|\TB_{n}|)^{\frac{1}{N}}}.
	\end{equation*}
\end{definition}

Suspiciousness (Definition~\ref{defn:density:susp}), which was used for detecting `retweet-boosting' activities \cite{jiang2014catchsync}, is the negative log-likelihood that $\TB$ has mass $\mass{\TB}$ under the assumption that each entry of $\TR$ is i.i.d from a Poisson distribution.
\begin{definition}[Suspiciousness $\densitysuspnoarg$ \cite{jiang2015general}] The {\it suspiciousness} of a subtensor $\TB$ of a relation $\TR$ is defined as  \label{defn:density:susp}
	\begin{align*}
	\densitysusp{\TB}{\TR} & = \densitysuspnoarg(\mass{\TB}, \{|\TB_{n}|\}_{n=1}^{N}, \mass{\TR}, \{|\TR_{n}|\}_{n=1}^{N}) \\
	& = \mass{\TB}\left(\log\frac{\mass{\TB}}{\mass{\TR}}-1\right)+\mass{\TR}\prod_{n=1}^{N}\frac{|\TB_{n}|}{|\TR_{n}|}-\mass{\TB}\log\left(\prod_{n=1}^{N}\frac{|\TB_{n}|}{|\TR_{n}|}\right).
	\end{align*}
\end{definition}

Entry Surplus (Definition~\ref{defn:surp}) is the observed mass of $\TB$ subtracted by $\alpha$ times the expected mass, under the assumption that the value of each entry (in the tensor representation) in $\TR$ is i.i.d. 
It is a multi-dimensional extension of edge surplus \cite{tsourakakis2013denser}, a density metric for graphs.
\begin{definition}[Entry Surplus \cite{shin2018fast}]
	\label{defn:surp}
	The {\it entry surplus} of a subtensor $\TB$ of a relation $\TR$ is defined as 
	\begin{align*}
	\densitysurp{\TB}{\TR} & = \densitysurpnoarg(\mass{\TB}, \{|\TB_{n}|\}_{n=1}^{N}, \mass{\TR}, \{|\TR_{n}|\}_{n=1}^{N})  \\
	& = \mass{\TB}-\alpha\mass{\TR}\prod_{n=1}^{N}\frac{|\TB_{n}|}{|\TR_{n}|}.
	\end{align*}
\end{definition}
Subtensors with high entry surplus are configurable by adjusting $\alpha$.
With high $\alpha$ values, relatively small compact subtensors have higher entry surplus than large sparse subtensors, while the opposite happens with small $\alpha$ values.
We show this tendency experimentally in Section~\ref{sec:exp:surp}.

\subsection{Problem Definition}
\label{sec:prelim:problem}

Based on the concepts and density measures in the previous sections, we define the problem of top-$k$ dense-subtensor detection in a large-scale tensor in Definition~\ref{defn:problem}.
\begin{problem}[Large-scale Top-$k$ Densest Subtensor Detection] \label{defn:problem}
	{\bf (1) Given:} a large-scale relation $\TR$ not fitting in memory, the number of subtensors $k$, and a density measure $\density$,
	{\bf (2) Find:} the top-$k$ subtensors of $\TR$ with the highest density in terms of $\density$.
\end{problem}

Even when we restrict our attention to finding one subtensor in a matrix fitting in memory (i.e., $k=1$ and $N=2$), obtaining an exact solution takes $O((\sum_{n=1}^{N}|\TR_{n}|)^{6})$ time \cite{goldberg1984finding,khuller2009finding}, which is infeasible for large-scale tensors.
Thus, our focus in this work is to design an approximate algorithm with  (1) near-linear scalability with all aspects of $\TR$, which does not fit in memory, (2) an approximation guarantee at least for some density measures, and (3) meaningful results on real-world data.

\section{Proposed Method}
\label{sec:method}
In this section, we propose \method, a disk-based dense-subtensor detection method. 
We first describe \method in Section~\ref{sec:method:alg}. Then, we prove its theoretical properties in Section~\ref{sec:method:analysis}.
Lastly, we present our \mapreduce implementation of \method in Section~\ref{sec:method:mr}.
Throughout these subsections, we assume that the entries of tensors (i.e., the tuples of relations) are stored on disk and read/written only in a sequential way. 
However, all other data (e.g., distinct attribute-value sets and the mass of each attribute value) are assumed to be stored in memory.

\subsection{Algorithm}
\label{sec:method:alg}
\method is a search method that starts with the given relation and removes attribute values (and the tuples with the attribute values) sequentially so that a dense subtensor is left.
Contrary to previous approaches, \method removes multiple attribute values (and the tuples with the attribute values) at a time to reduce the number of iterations and also disk I/Os.
In addition to this advantage, \method carefully chooses attribute values to remove to give the same accuracy guarantee as if attribute values were removed one by one, and shows similar or even higher accuracy empirically.

\begin{algorithm}[t]
	\caption{\method}
	\label{alg:method}
	\small
	\SetKwInOut{Input}{Input}
	\SetKwInOut{Output}{Output}
	\Input{
		relation: $\TR$,
		\ density measure: $\density$, \\
		\ threshold: $\theta(\geq 1)$, \\
		\ the number of subtensors we aim to find: $k$
	}
	\Output{
		$k$ dense subtensors
	}
	$\TR^{ori} \leftarrow copy(\TR)$ \label{alg:method:line:copy} \\
	compute $\{\TR_{n}\}_{n=1}^{N}$ \label{alg:method:line:distinct}\\
	$results\leftarrow \emptyset$ \bcomment{list of dense subtensors} \\
	\For{$i\leftarrow 1..k$}{
		$\mass{\TR} \leftarrow \sum_{t\in\TR}t[X]$ \label{alg:method:line:mass} \\
		$\{\TB_{n}\}_{n=1}^{N}\leftarrow find\_one(\TR, \{\TR_{n}\}_{n=1}^{N}, \mass{\TR},  \density, \theta)$ 
		\label{alg:method:line:single} \qquad \qquad
		\bcomment{see Algorithm~\ref{alg:method:single}}\\
		${\TR} \leftarrow \{t \in \TR: \exists n\in[N], t[A_{n}]\notin \TB_{n}\}$ \label{alg:method:line:remove} \bcomment{$\TR \leftarrow \TR - \TB$} \\
		${\TB}^{ori}\leftarrow\{t \in \TR^{ori} : \forall n\in[N], t[A_{n}]\in \TB_{n}\}$ \label{alg:method:line:refine}\\
		$results\leftarrow results\cup \{\TB^{ori}\}$  \label{alg:method:line:single:include}\\
	}
	{\bf return} $results$
\end{algorithm}

\subsubsection{Overall Structure of D-Cube (Algorithm~\ref{alg:method})}
\label{sec:method:alg:overall}

Algorithm~\ref{alg:method} describes the overall structure of \method. It first copies and assigns the given relation $\TR$ to $\TR^{ori}$ (line~\ref{alg:method:line:copy}); and computes the sets of distinct attribute values composing $\TR$ (line~\ref{alg:method:line:distinct}).
Then, it finds $k$ dense subtensors one by one from $\TR$ (line~\ref{alg:method:line:single}) using its mass as a parameter (line~\ref{alg:method:line:mass}).
The detailed procedure for detecting a single dense subtensor from $\TR$ is explained in Section~\ref{sec:method:alg:single}.
After each subtensor $\TB$ is found, the tuples included in $\TB$ are removed from $\TR$ (line~\ref{alg:method:line:remove}) to prevent the same subtensor from being found again.
Due to this change in $\TR$, subtensors found from $\TR$ are not necessarily the subtensors of the original relation $\TR^{ori}$.
Thus, instead of $\TB$, the subtensor in $\TR^{ori}$ formed by the same attribute values forming $\TB$ is added to the list of $k$ dense subtensors (lines~\ref{alg:method:line:refine}-\ref{alg:method:line:single:include}).
Notice that, due to this step, \method can detect overlapping dense subtensors. 
That is, a tuple can be included in multiple dense subtensors. 

Based on our assumption that the sets of distinct attribute values (i.e., $\{\TR_{n}\}_{n=1}^{N}$ and $\{\TB_{n}\}_{n=1}^{N}$) are stored in memory and can be randomly accessed, all the steps in Algorithm~\ref{alg:method} can be performed by sequentially reading and writing tuples in relations (i.e.,  tensor entries) in disk without loading all the tuples in memory at once.
For example, the filtering steps in lines~\ref{alg:method:line:remove}-\ref{alg:method:line:refine} can be performed by sequentially reading each tuple from disk and writing the tuple to disk only if it satisfies the given condition.

Note that this overall structure of \method is similar to that of \mzoom \cite{shin2018fast} except that tuples are stored on disk.
However, the methods differ significantly in the way each dense subtensor is found from $\TR$, which is explained in the following section.

\begin{algorithm}[t]
	\caption{$find\_one$ in \method}
	\label{alg:method:single}
	\small
	\SetKwInOut{Input}{Input}
	\SetKwInOut{Output}{Output}
	\Input{
		relation: $\TR$, \ attribute-value sets: $\{\TR_{n}\}_{n=1}^{N}$,\\
		\ mass: $\mass{\TR}$, density measure: $\density$, threshold: $\theta(\geq 1)$
	}
	\Output{
		attribute values forming 
		a dense subtensor \\
	}
	$\TB \leftarrow copy(\TR)$, $\mass{\TB}\leftarrow\mass{\TR}$ \label{alg:method:single:line:init1} \bcomment{initialize the subtensor $\TB$} \\
	$\TB_{n}\leftarrow copy(\TR_{n})$, $\forall n\in[N]$ \label{alg:method:single:line:init2}\\
	$\tilde{\density}\leftarrow\density(\mass{\TB}, \{|\TB_{n}|\}_{n=1}^{N}, \mass{\TR}, \{|\TR_{n}|\}_{n=1}^{N})$ \bcomment{$\tilde{\rho}$: max $\density$ so far}  \\
	$r, \tilde{r}\leftarrow 1$ \bcomment{$r$: current order of attribute values, $\tilde{r}$: $r$ with $\tilde{\density}$} \\
	\While{$\exists n\in [N], \TB_{n}\neq \emptyset$}{ \label{alg:method:single:loop} \nonl \vspace{-2.5mm} \  \bcomment{until all values are removed} \\
		compute $\{\{\mass{\TB(a,n)}\}_{a\in\TB_{n}}\}_{n=1}^{N}$ \label{alg:method:line:single:mass} \\
		$i \leftarrow select\_dimension()$ \bcomment{see Algorithms~\ref{alg:method:guaranteed} and \ref{alg:method:flexible} \label{alg:method:line:single:select}}\\ 
		$D_{i}\leftarrow \{a\in \TB_{i} : \mass{\TB(a,i)} \leq \theta\frac{\mass{\TB}}{|\TB_{i}|}\}$ \label{alg:method:line:single:threshold} \bcomment{$D_{i}$: set to be removed} \\
		sort $D_{i}$ in an increasing order of $\mass{\TB(a,i)}$ \label{alg:method:line:single:sort} \\
		\For{{\normalfont each value} $a\in D_{i}$}{
			$\TB_{i}\leftarrow \TB_{i}-\{a\}$, $\mass{\TB}\leftarrow \mass{\TB}-\mass{\TB(a,i)}$ \label{alg:method:line:single:remove_one} \\
			$\density'\leftarrow \density(\mass{\TB}, \{|\TB_{n}|\}_{n=1}^{N}, \mass{\TR}, \{|\TR_{n}|\}_{n=1}^{N})$  \label{alg:method:line:single:density} \\ \nonl \ \bcomment{$\density':$ $\density$ when $a$ is removed} \\
			$order(a,i)\leftarrow r$, $r\leftarrow r+1$ \label{alg:method:line:single:order1} \\
			\If{$\density' > \tilde{\density}$}{
				$\tilde{\density}\leftarrow\density'$, $\tilde{r}\leftarrow r$
				\label{alg:method:line:single:order2}  \bcomment{update max $\density$ so far}
			}
		}
		$\TB\leftarrow \{t\in\TB : t[A_{i}]\notin D_{i}\}$ \label{alg:method:line:single:remove} \bcomment{remove tuples} \\
	}
	$\tilde{\TB}_{n}\leftarrow\{a\in\TR_{n}: order(a,n) \geq \tilde{r}\}$, $\forall n\in [N]$ \bcomment{reconstruct}
	\label{alg:method:line:single:reconstruct}\\
	{\bf return} $\{\tilde{\TB}_{n}\}_{n=1}^{N}$ \label{alg:method:line:single:return}
\end{algorithm}

\subsubsection{Single Subtensor Detection (Algorithm~\ref{alg:method:single})}
\label{sec:method:alg:single}

Algorithm~\ref{alg:method:single} describes how \method detects each dense subtensor from the given relation $\TR$.
It first initializes a subtensor $\TB$ to $\TR$ (lines~\ref{alg:method:single:line:init1}-\ref{alg:method:single:line:init2}) then repeatedly removes attribute values and the tuples of $\TB$ with those attribute values until all values are removed (line~\ref{alg:method:single:loop}).

Specifically, in each iteration, \method first chooses a dimension attribute $A_{i}$ that attribute values are removed from (line~\ref{alg:method:line:single:select}).
Then, it computes $D_{i}$, the set of attribute values whose masses are less than $\theta(\geq 1)$ times the average (line~\ref{alg:method:line:single:threshold}).
We explain how the dimension attribute is chosen, in Section~\ref{sec:method:alg:dimension} and
analyze the effects of $\theta$ on the accuracy and the time complexity, in Section~\ref{sec:method:analysis}.
The tuples whose attribute values of $A_{i}$ are in $D_{i}$ are removed from $\TB$ at once within a single scan of $\TB$ (line~\ref{alg:method:line:single:remove}).
However, deleting a subset of $D_{i}$ may achieve higher value of the metric $\density$. Hence, \method computes the changes in the density of $\TB$ (line 11) as if the attribute values in $D_i$ were removed one by one, in an increasing order of their masses. This allows \method to optimize $\density$ as if we removed attributes one by one, while still benefiting from the computational speedup of removing multiple attributes in each scan.
Note that these changes in $\density$ can be computed exactly without actually removing the tuples from $\TB$ or even accessing the tuples in $\TB$ since its mass (i.e., $\mass{\TB}$) and the number of distinct attribute values (i.e., $\{|\TB_{n}|\}_{n=1}^{N}$) are maintained up-to-date (lines~\ref{alg:method:line:single:remove_one}-\ref{alg:method:line:single:density}).
This is because removing an attribute value from a dimension attribute does not affect the masses of the other values of the same attribute.
The orders that attribute values are removed and when the density of $\TB$ is maximized are maintained (lines~\ref{alg:method:line:single:order1}-\ref{alg:method:line:single:order2}) so that the subtensor $\TB$ maximizing the density can be restored and returned (lines~\ref{alg:method:line:single:reconstruct}-\ref{alg:method:line:single:return}), as the result of Algorithm~\ref{alg:method:single}.

Note that, in each iteration (lines~\ref{alg:method:single:loop}-\ref{alg:method:line:single:remove}) of Algorithm~\ref{alg:method:single}, the tuples of $\TB$, which are stored on disk, need to be scanned only twice, once in line~\ref{alg:method:line:single:mass} and once in line~\ref{alg:method:line:single:remove}.
Moreover, both steps can be performed by simply sequentially reading and/or writing tuples in $\TB$ without loading all the tuples in memory at once.
For example, to compute attribute-value masses in line~\ref{alg:method:line:single:mass}, \method increases $\mass{\TB(t[A_{n}],n)}$ by $t[X]$ for each dimension attribute $A_{n}$ after reading each tuple $t$ in $\TB$ sequentially from disk.

\begin{algorithm}[t]
	\caption{$select\_dimension$ by cardinality}
	\label{alg:method:guaranteed}
	\small
	\SetKwInOut{Input}{Input}
	\SetKwInOut{Output}{Output}
	\Input{
		attribute-value sets: $\{\TB_{n}\}_{n=1}^{N}$
	}
	\Output{
		a dimension in $[N]$
	}
	{\bf return} $n\in[N]$ with maximum $|\TB_{n}|$
\end{algorithm}

\begin{algorithm}[t]
	\caption{$select\_dimension$ by density}
	\label{alg:method:flexible}
	\small
	\SetKwInOut{Input}{Input}
	\SetKwInOut{Output}{Output}
	\Input{
		attribute-value sets: $\{\TB_{n}\}_{n=1}^{N}$ and $\{\TR_{n}\}_{n=1}^{N}$, \\
		\ {\small attribute-value masses: $\{\{\mass{\TB(a,n)}\}_{a\in\TB_{n}}\}_{n=1}^{N}$},\\
		\ masses: $\mass{\TB}$ and $\mass{\TR}$, density measure: $\density$, \\
		\ threshold: $\theta(\geq 1)$
	}
	\Output{
		a dimension in $[N]$
	}
	$\tilde{\density}\leftarrow -\infty$, $\tilde{i}\leftarrow 1$  \bcomment{$\tilde{\rho}$: max $\density$ so far, $\tilde{i}$: dimension with  $\tilde{\rho}$} \\
	\For{{\normalfont each dimension} $i\in[N]$}{
		\If{$\TB_{i}\neq\emptyset$}{
			$D_{i}\leftarrow \{a\in \TB_{i} : \mass{\TB(a,i)} \leq \theta\frac{\mass{\TB}}{|\TB_{i}|}\}$ \\ \nonl \ \bcomment{$D_{i}:$ set to be removed} \\
			$\mass{\TB}'\leftarrow \mass{\TB}-\sum_{a\in D_{i}}\mass{\TB(a,i)}$ \\
			$\TB'_{i} \leftarrow \TB_{i}-D_{i}$ \\
			$\density'\leftarrow \density(\mass{\TB}', \{|\TB_{n}|\}_{n\neq i}\cup\{|\TB'_{i}|\}, \mass{\TR}, \{|\TR_{n}|\}_{n=1}^{N})$ \\ \nonl \ \bcomment{$\density':$ $\density$ when $D_{i}$ are removed} \\
			\If{$\density' > \tilde{\density}$}{
				$\tilde{\density}\leftarrow \density'$, $\tilde{i}\leftarrow i$\bcomment{update max $\density$ so far}
			}
		}
	}
	{\bf return} $\tilde{i}$
\end{algorithm}

\subsubsection{Dimension Selection (Algorithms~\ref{alg:method:guaranteed} and \ref{alg:method:flexible})}
\label{sec:method:alg:dimension}

We discuss two policies for choosing a dimension attribute that attribute values are removed from.
They are used in line~\ref{alg:method:line:single:select} of Algorithm~\ref{alg:method:single} offering different advantages.

{\bf Maximum Cardinality Policy (Algorithm~\ref{alg:method:guaranteed})}: The dimension attribute with the largest cardinality is chosen, as described in Algorithm~\ref{alg:method:guaranteed}.
This simple policy, however, provides an accuracy guarantee (see Theorem~\ref{thm:accuracy:guarantee} in Section~\ref{sec:method:analysis:accuracy}).

{\bf Maximum Density Policy (Algorithm~\ref{alg:method:flexible})}: The density of $\TB$ when attribute values are removed from each dimension attribute is computed. Then, the dimension attribute leading to the highest density is chosen.
Note that the tuples in $\TB$, stored on disk, do not need to be accessed for this computation, as described in Algorithm~\ref{alg:method:flexible}.
Although this policy does not provide the accuracy guarantee given by the maximum cardinality policy, this policy works well with various density measures and tends to spot denser subtensors than the maximum cardinality policy in our experiments with real-world data.

\subsubsection{Efficient Implementation}
\label{sec:method:alg:impl}
We present the optimization techniques used for the efficient implementation of \method.

{\bf Combining Disk-Accessing Steps.}
The amount of disk I/O can be reduced by combining multiple steps involving disk accesses.
In Algorithm~\ref{alg:method}, updating $\TR$ (line~\ref{alg:method:line:remove}) in an iteration can be combined with computing the mass of $\TR$ (line~\ref{alg:method:line:mass}) in the next iteration.
That is, if we aggregate the values of the tuples of $\TR$ while they are written for the update, we do not need to scan $\TR$ again for computing its mass in the next iteration.
Likewise, in Algorithm~\ref{alg:method:single}, updating $\TB$ (line~\ref{alg:method:line:single:remove}) in an iteration can be combined with computing attribute-value masses (line~\ref{alg:method:line:single:mass}) in the next iteration.
This optimization reduces the amount of disk I/O in \method about 30\%.

{\bf Caching Tensor Entries in Memory.} 
Although we assume that tuples are stored on disk, storing them in memory up to the memory capacity speeds up \method up to $3$ times in our experiments (see Section~\ref{sec:experiments:scalability:data}).
We cache the tuples in $\TB$, which are more frequently accessed than those in $\TR$ or $\TR^{ori}$, in memory with the highest priority.

\subsection{Analyses}
\label{sec:method:analysis}

In this section, we prove the time and space complexities of \method and the accuracy guarantee provided by \method.
\change{Then, we theoretically compare \method with \mzoom and \mbiz \cite{shin2018fast}.}

\subsubsection{Complexity Analyses}
\label{sec:method:analysis:complexity}
Theorem~\ref{thm:time:worst} states the worst-case time complexity, which equals to the worst-case I/O complexity, of \method.

\begin{lemma}[Maximum Number of Iterations in Algorithm~\ref{alg:method:single}]
	\label{lemma:epsilon}
	Let $L=\max_{n\in[N]}|\TR_{n}|$. Then,
	the number of iterations (lines~\ref{alg:method:single:loop}-\ref{alg:method:line:single:remove}) in Algorithm~\ref{alg:method:single} is at most
	\begin{equation*}
	N\min(\log_{\theta}L,L).
	\end{equation*}
\end{lemma}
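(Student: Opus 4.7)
The plan is to reduce the total iteration count to a per-dimension count and then sum. Every iteration of the \texttt{while} loop (lines~\ref{alg:method:single:loop}--\ref{alg:method:line:single:remove}) picks exactly one dimension $i$ via $select\_dimension$ and removes the attribute values in $D_i=\{a\in\TB_i:\mass{\TB(a,i)}\le \theta\mass{\TB}/|\TB_i|\}$ from $\TB_i$; the cardinalities $|\TB_j|$ for $j\neq i$ are never increased by the subsequent tuple-deletion in line~\ref{alg:method:line:single:remove} (that step only deletes tuples, it cannot add values to any $\TB_j$). Letting $f(i)$ be the number of iterations in which dimension $i$ is chosen, the total iteration count is $\sum_{n=1}^{N} f(n)$, and it suffices to prove $f(i)\le \min(\log_\theta L,L)$ for every $i$.

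The heart of the argument is a single-step shrinkage bound: whenever dimension $i$ is selected and $\mass{\TB}>0$, I would show $|\TB_i\setminus D_i|<|\TB_i|/\theta$. The computation decomposes $\mass{\TB}=\sum_{a\in\TB_i}\mass{\TB(a,i)}$, and by definition of $D_i$ every $a\notin D_i$ contributes strictly more than $\theta\mass{\TB}/|\TB_i|$, giving
\[
\mass{\TB}\ \ge\ \sum_{a\notin D_i}\mass{\TB(a,i)}\ >\ |\TB_i\setminus D_i|\cdot\theta\,\mass{\TB}/|\TB_i|,
\]
which rearranges to the desired inequality (the $\mass{\TB}=0$ case is handled separately: then $D_i=\TB_i$ and $\TB_i$ empties in one step). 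In parallel, since $\theta\ge 1$ the average mass itself lies in $D_i$, so $|D_i|\ge 1$, giving the crude bound $|\TB_i\setminus D_i|\le |\TB_i|-1$.

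Iterating the two inequalities from the initial value $|\TB_i|\le|\TR_i|\le L$, after $k$ selections of dimension $i$ one has $|\TB_i|<L/\theta^k$ and $|\TB_i|\le L-k$; since $|\TB_i|$ is a nonnegative integer and no iteration can select an empty $\TB_i$, this forces $f(i)\le \min(\log_\theta L, L)$, and summing over $i\in[N]$ yields the claim. The main obstacle I anticipate is being careful with boundary cases: the degenerate $\theta=1$ (where $\log_\theta L$ is infinite but the $L$ bound still applies), the $\mass{\TB}=0$ step, and confirming that between two consecutive selections of the same dimension $i$, the value of $|\TB_i|$ only weakly decreases so the geometric shrinkage compounds correctly across the entire execution.
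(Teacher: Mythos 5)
Your proof is correct and takes essentially the same route as the paper's: the identical shrinkage inequality $|\TB_{i}\setminus D_{i}|<|\TB_{i}|/\theta$, obtained by splitting $\mass{\TB}$ over $D_{i}$ and its complement and using that every surviving value has mass strictly above $\theta\mass{\TB}/|\TB_{i}|$, combined with the at-least-one-removal-per-iteration observation and a per-dimension count to reach $N\min(\log_{\theta}L,L)$. Your treatment is marginally more careful on edge cases ($\mass{\TB}=0$, $\theta=1$, monotonicity of $|\TB_{i}|$ between selections), and your final conclusion correctly reads $\min$ where the paper's closing sentence contains a $\max$ typo.
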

\begin{proof}
	In each iteration (lines~\ref{alg:method:single:loop}-\ref{alg:method:line:single:remove})
	of Algorithm~\ref{alg:method:single}, among the values of the chosen
	dimension attribute $A_{i}$, attribute values whose masses are at most $\theta\frac{\mass{\TB}}{|\TB_{i}|}$, where $\theta\geq 1$, are removed.
	The set of such attribute values is denoted by $D_{i}$. We will show that, if ${|\TB_{i}|}>0$, then 
	\begin{equation}
	|\TB_{i}\backslash D_{i}|<{|\TB_{i}|}/{\theta}.\label{eq:decrease:speed}
	\end{equation}
	Note that, when $|\TB_{i}\backslash D_{i}|=0$, Eq.~\eqref{eq:decrease:speed}
	trivially holds. When $|\TB_{i}\backslash D_{i}|>0$, $\mass{\TB}$
	can be factorized and lower bounded as 
	\begin{align*}
	\mass{\TB} =\sum\nolimits _{a\in\TB_{i}\backslash D_{i}}\mass{\TB(a,i)}+\sum\nolimits _{a\in D_{i}}\mass{\TB(a,i)}
	\geq\sum\nolimits _{a\in\TB_{i}\backslash D_{i}}\mass{\TB(a,i)}>|\TB_{i}\backslash D_{i}|\cdot\theta\frac{\mass{\TB}}{|\TB_{i}|},
	\end{align*}
	where the last strict inequality is from the definition of $D_{i}$
	and that $|\TB_{i}\backslash D_{i}|>0$. This strict inequality
	implies $\mass{\TB}>0$, and thus dividing both sides by $\theta\frac{\mass{\TB}}{|\TB_{i}|}$
	gives Eq.~\eqref{eq:decrease:speed}. Now, Eq.~\eqref{eq:decrease:speed}
	implies that the number of remaining values of the chosen attribute
	after each iteration is less than $1/\theta$ of that before the iteration.
	Hence each attribute can be chosen at most $\log_{\theta}L$ times
	before all of its values are removed. Thus, the maximum number of
	iterations is at most $N\log_{\theta}L$. Also, by Eq.~\eqref{eq:decrease:speed},
	at least one attribute value is removed per iteration. Hence, the
	maximum number of iterations is at most the number of attribute values,
	which is upper bounded by $NL$. Hence the number of iterations is
	upper bounded by $N\max(\log_{\theta}L,L)$. 
\end{proof}

\begin{theorem}[Worst-case Time Complexity]
	\label{thm:time:worst}
	Let $L=\max_{n\in[N]}|\TR_{n}|$. 
	If $\theta=O\left(e^{(\frac{N|\TR|}{L})}\right)$, which is a weaker condition than $\theta=O(1)$, the worst-case time complexity of Algorithm~\ref{alg:method} is
	\begin{equation}
	O(kN^2|\TR|\min(\log_{\theta}L, L)). \label{eq:time_complexity}
	\end{equation}
\end{theorem}
\begin{proof}
	From Lemma~\ref{lemma:epsilon}, the number of iterations (lines~\ref{alg:method:single:loop}-\ref{alg:method:line:single:remove}) in Algorithm~\ref{alg:method:single} is $O(N\min(\log_{\theta}L,L))$.
	Executing lines~\ref{alg:method:line:single:mass} and \ref{alg:method:line:single:remove} $O(N\min(\log_{\theta}L,L))$ times takes $O(N^2 |\TR|\min(\log_{\theta}L,L))$, which dominates the time complexity of the other parts. For example, repeatedly executing line~\ref{alg:method:line:single:sort} takes $O(NL\log_{2}L)$, and by our assumption, it is dominated by $O(N^2 |\TR|\min(\log_{\theta}L,L))$.
	Thus, the worst-case time complexity of Algorithm~\ref{alg:method:single} is $O(N^2 |\TR|\min(\log_{\theta}L,L))$, and that of Algorithm~\ref{alg:method}, which executes Algorithm~\ref{alg:method:single}, $k$ times, is $O(kN^2 |\TR|\min(\log_{\theta}L,L))$. 
\end{proof}


However, this worst-case time complexity, which allows the worst distributions of the measure attribute values of tuples, is too pessimistic. 
In Section~\ref{sec:experiments:scalability:data}, we experimentally show that \method scales linearly with $k$, $N$, and $\TR$; and sub-linearly with $L$ even when $\theta$ is its smallest value $1$.


Theorem~\ref{thm:space} states the memory requirement of \method. 
Since the tuples do not need to be stored in memory all at once in \method, its memory requirement does not depend on the number of tuples (i.e., $|\TR|$).

\begin{theorem}[Memory Requirements]
	\label{thm:space}
	The amount of memory space in Algorithm~\ref{alg:method} is $O(\sum_{n=1}^{N}|\TR_{n}|)$.
\end{theorem}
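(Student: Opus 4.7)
The plan is to walk through Algorithms~\ref{alg:method} and \ref{alg:method:single} (together with the two dimension-selection subroutines) and account, variable by variable, for every structure that lives in main memory. The key reduction is that by the assumption stated at the beginning of Section~\ref{sec:method}, every relation ($\TR$, $\TB$, $\TR^{ori}$) is kept on disk and only streamed tuple-by-tuple, so none of these contributes more than an $O(1)$ buffer to the memory footprint; all that remains is bookkeeping indexed by attribute \emph{values}, not by tuples.

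I would group the in-memory state into three categories. First, the distinct-value sets $\TR_{n}$, $\TB_{n}$, and $\tilde{\TB}_{n}$ used in lines~\ref{alg:method:line:distinct} of Algorithm~\ref{alg:method} and throughout Algorithm~\ref{alg:method:single}: each is a subset of $\TR_{n}$, so the total is $O(\sum_{n=1}^{N}|\TR_{n}|)$. Second, the attribute-value mass tables $\{\mass{\TB(a,n)}\}_{a\in\TB_{n}}$ computed in line~\ref{alg:method:line:single:mass} and the ordering/bookkeeping map $order(a,n)$ maintained in lines~\ref{alg:method:line:single:order1}--\ref{alg:method:line:single:order2}: both are indexed by attribute values, giving $\sum_{n=1}^{N}|\TB_{n}|\leq\sum_{n=1}^{N}|\TR_{n}|$ entries in total. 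Third, a constant number of scalars per dimension ($\mass{\TR}$, $\mass{\TB}$, $\tilde{\density}$, $\tilde{r}$, $r$, $|\TB_n|$, etc.), contributing $O(N)=O(\sum_{n}|\TR_{n}|)$.

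I then need to verify that every disk-touching step in Algorithms~\ref{alg:method} and \ref{alg:method:single} can indeed be executed streamwise with the in-memory state above: the mass computation in line~\ref{alg:method:line:mass} and line~\ref{alg:method:line:single:mass} is a single sequential scan accumulating into the already-counted mass tables; the filter in line~\ref{alg:method:line:remove} and the restriction in line~\ref{alg:method:line:refine} are sequential read-write passes that test each tuple against the in-memory sets $\TB_{n}$; and line~\ref{alg:method:line:single:remove} of Algorithm~\ref{alg:method:single} is the same kind of streaming filter against $D_{i}$. In every case the tuple currently being examined needs only $O(1)$ extra space. The $results$ list stores only the attribute-value sets defining each of the $k$ output subtensors, not their tuples, which (treating $k$ as a constant of the problem) is again $O(\sum_{n=1}^{N}|\TR_{n}|)$.

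The main obstacle is the bookkeeping aspect: one must be careful to argue that the dimension-selection subroutines (Algorithms~\ref{alg:method:guaranteed} and \ref{alg:method:flexible}) do not secretly require loading tuples, which follows because Algorithm~\ref{alg:method:flexible} only reads the cached sets $\TB_{n}$, $\TR_{n}$, the attribute-value masses $\mass{\TB(a,i)}$, and the scalars $\mass{\TB}$, $\mass{\TR}$, all of which are already in memory by the argument above. Summing the three categories gives the claimed bound $O(\sum_{n=1}^{N}|\TR_{n}|)$, completing the proof.
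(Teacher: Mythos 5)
Your proof is correct and follows the same approach as the paper's: enumerate the in-memory structures (the attribute-value sets, the attribute-value mass tables, and per-value bookkeeping), observe that each holds at most $\sum_{n=1}^{N}|\TR_{n}|$ entries while tuples are only streamed from disk, and conclude the bound. The paper's own proof is simply a terser version of this accounting, listing only $\{\{\mass{\TB(a,n)}\}_{a\in\TB_{n}}\}_{n=1}^{N}$, $\{\TR_{n}\}_{n=1}^{N}$, and $\{\TB_{n}\}_{n=1}^{N}$; your additional verification of the streaming steps and the dimension-selection subroutines is consistent with, and fills in, the paper's argument.
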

\begin{proof}
	In Algorithm~\ref{alg:method}, $\{\{\mass{\TB(a,n)}\}_{a\in\TB_{n}}\}_{n=1}^{N}$, $\{\TR_{n}\}_{n=1}^{N}$, and $\{\TB_{n}\}_{n=1}^{N}$ need to be loaded into memory at once.
	Each has at most $\sum_{n=1}^{N}|\TR_{n}|$ values.
	Thus, the memory requirement is $O(\sum_{n=1}^{N}|\TR_{n}|)$. 
\end{proof}

\subsubsection{Accuracy in \change{Dense-subtensor Detection}}
\label{sec:method:analysis:accuracy}
We show that \method gives the same accuracy guarantee with in-memory algorithms \cite{shin2018fast}, if we set $\theta$ to $1$, although accesses to tuples (stored on disk) are restricted in \method to reduce disk I/Os. 
Specifically, Theorem~\ref{thm:accuracy:guarantee} states that the subtensor found by Algorithm~\ref{alg:method:single} with the maximum cardinality policy has density at least $\frac{1}{\theta N}$ of the optimum when $\densityarinoarg$ is used as the density measure.

\begin{theorem}[$\theta N$-Approximation Guarantee]
	\label{thm:accuracy:guarantee}
	Let $\TB^{*}$ be the subtensor $\TB$ maximizing $\densityari{\TB}{\TR}$ in the given relation $\TR$.
	Let $\tilde{\TB}$ be the subtensor returned by Algorithm~\ref{alg:method:single} with $\densityarinoarg$ and the maximum cardinality policy.
	Then,
	\begin{equation*}
	\densityari{\tilde{\TB}}{\TR} \geq \frac{1}{\theta N}\densityari{\TB^{*}}{\TR}.
	\end{equation*}
\end{theorem}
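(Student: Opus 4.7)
The plan is to adapt the classical Charikar-style greedy approximation argument for the densest subgraph problem to the tensor setting, with additional care for the fact that \method removes multiple attribute values per iteration rather than one at a time. Writing $\rho^{*} := \densityari{\TB^{*}}{\TR}$, the first step is to establish the first-order optimality condition that, for every dimension $n \in [N]$ and every $a \in \TB^{*}_{n}$, we have $\mass{\TB^{*}(a,n)} \geq \rho^{*}/N$: if this bound failed, deleting $a$ from dimension $n$ of $\TB^{*}$ would strictly increase $\densityarinoarg$, contradicting the optimality of $\TB^{*}$. This is a routine rearrangement, with the degenerate case $|\TB^{*}_{n}| = 1$ handled trivially since then $\mass{\TB^{*}(a,n)} = \mass{\TB^{*}}$.

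Next, I would define the \emph{critical iteration} as the first iteration $t$ of the while-loop in Algorithm~\ref{alg:method:single} whose removal set $D_{i}$ contains some value of $\TB^{*}$. Let $\TB^{(t)}$ denote the state of $\TB$ at the start of iteration $t$, and let $i$ be the dimension selected in that iteration. Because no value of $\TB^{*}$ has been removed before iteration $t$, the tuple-inclusion $\TB^{*} \subseteq \TB^{(t)}$ holds, and there exists some $a^{*} \in \TB^{*}_{i} \cap D_{i}$. The removal rule in line~\ref{alg:method:line:single:threshold} gives the upper bound $\mass{\TB^{(t)}(a^{*},i)} \leq \theta \cdot \mass{\TB^{(t)}}/|\TB^{(t)}_{i}|$, while the inclusion combined with the first-order optimality bound gives the lower bound $\mass{\TB^{(t)}(a^{*},i)} \geq \mass{\TB^{*}(a^{*},i)} \geq \rho^{*}/N$. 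Chaining these yields $\mass{\TB^{(t)}}/|\TB^{(t)}_{i}| \geq \rho^{*}/(\theta N)$. The maximum cardinality policy (Algorithm~\ref{alg:method:guaranteed}) additionally ensures $|\TB^{(t)}_{i}| \geq \tfrac{1}{N}\sum_{n=1}^{N}|\TB^{(t)}_{n}|$, so
\[
\densityari{\TB^{(t)}}{\TR} \;\geq\; \frac{\mass{\TB^{(t)}}}{|\TB^{(t)}_{i}|} \;\geq\; \frac{\rho^{*}}{\theta N},
\]
and since the density of $\TB^{(t)}$ is compared against $\tilde{\rho}$ inside the algorithm (either as the initialization when $t=1$ or as the outcome of the last inner removal of iteration $t-1$), we conclude $\densityari{\tilde{\TB}}{\TR} \geq \densityari{\TB^{(t)}}{\TR} \geq \rho^{*}/(\theta N)$.

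The main obstacle is justifying that the ``batching'' of removals within one iteration does not weaken the approximation relative to single-removal greedy algorithms. The crucial observation that rescues the argument is that removing other members of $D_{i}$ within iteration $t$ does not alter $\mass{\TB(a^{*},i)}$, because those removals concern values of the same dimension $A_{i}$; hence the membership condition $a^{*} \in D_{i}$, evaluated at the beginning of the iteration, remains exactly the right inequality to juxtapose with the optimality bound $\rho^{*}/N$. Modulo this observation, the proof reduces to the single chaining of three inequalities displayed above.
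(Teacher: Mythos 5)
Your proposal is correct and follows essentially the same route as the paper's own proof: the first-order optimality bound $\mass{\TB^{*}(a,n)}\geq\frac{1}{N}\densityari{\TB^{*}}{\TR}$, the focus on the earliest iteration whose removal set $D_{i}$ touches $\TB^{*}$ (so that $\TB^{*}\subseteq\TB'$ there), and the same chain of inequalities combining the threshold condition, the maximum cardinality policy, and the fact that $\tilde{\TB}$ maximizes the density over all intermediate states. Your closing remark about batched removals not affecting $\mass{\TB(a^{*},i)}$ within the critical iteration is a sound clarification of a point the paper leaves implicit.
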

\begin{proof}

	First, the maximal subtensor $\TB^{*}$ satisfies that, for any $i\in[N]$
	and for any attribute value $a\in\TB_{i}^{*}$, its attribute-value
	mass $\mass{\TB^{*}(a,i)}$ is at least $\frac{1}{N}\densityari{\TB^{*}}{\TR}$.
	This is since the maximality of $\densityari{\TB^{*}}{\TR}$ implies $\densityari{\TB^{*}-\TB^{*}(a,i)}{\TR}\leq\densityari{\TB^{*}}{\TR}$,
	and plugging in Definition \ref{defn:density:ari} to $\rho_{ari}$ gives  $\frac{\mass{\TB^{*}}-\mass{\TB^{*}(a,i)}}{\frac{1}{N}((\sum_{n=1}^{N}|\TB_{n}^{*}|)-1)}=\densityari{\TB^{*}-\TB^{*}(a,i)}{\TR}\leq\densityari{\TB^{*}}{\TR}=\frac{\mass{\TB^{*}}}{\frac{1}{N}\sum_{n=1}^{N}|\TB_{n}^{*}|}$, which reduces to 
	\begin{equation}
	\label{eq:maxblock:density}
	\mass{\TB^{*}(a,i)}\geq\frac{1}{N}\densityari{\TB^{*}}{\TR}.
	\end{equation}
	
	Consider the earliest iteration (lines~\ref{alg:method:single:loop}-\ref{alg:method:line:single:remove}) in Algorithm~\ref{alg:method:single}
	where an attribute value $a$ of $\TB^{*}$
	is included in $D_{i}$. 
	Let $\TB'$ be $\TB$ in the beginning of
	the iteration. Our goal is to prove $\densityari{\tilde{\TB}}{\TR}\geq\frac{1}{\theta N}\densityari{\TB^{*}}{\TR}$,
	which we will show as $\densityari{\tilde{\TB}}{\TR}\geq\densityari{\TB'}{\TR}\geq\frac{\mass{\TB'(a,i)}}{\theta}\geq\frac{\mass{\TB^{*}(a,i)}}{\theta}\geq\frac{1}{\theta N}\densityari{\TB^{*}}{\TR}.$
	
	First, $\densityari{\tilde{\TB}}{\TR}\geq\densityari{\TB'}{\TR}$ is from the maximality of $\densityari{\tilde{\TB}}{\TR}$
	among the densities of the subtensors generated in the iterations (lines \ref{alg:method:line:single:order1}-\ref{alg:method:line:single:order2} in Algorithm~\ref{alg:method:single}). Second,
	applying $|\TB'_{i}|\geq\frac{1}{N}\sum_{n=1}^{N}|\TB'_{n}|$ from
	the maximum cardinality policy (Algorithm \ref{alg:method:guaranteed}) to Definition \ref{defn:density:ari} of $\rho_{ari}$
	gives $\densityari{\TB'}{\TR}=\frac{\mass{\TB'}}{\frac{1}{N}\sum_{n=1}^{N}|\TB'_{n}|}\geq\frac{\mass{\TB'}}{|\TB'_{i}|}$.
	And $a\in D_{i}$ gives $\theta\frac{\mass{\TB'}}{|\TB'_{i}|}\geq\mass{\TB'(a,i)}$.
	So combining these gives $\densityari{\TB'}{\TR}\geq\frac{\mass{\TB'(a,i)}}{\theta}$.
	Third, $\frac{\mass{\TB'(a,i)}}{\theta}\geq\frac{\mass{\TB^{*}(a,i)}}{\theta}$ is from $\TB'\supset \TB^{*}$. Fourth, $\frac{\mass{\TB^{*}(a,i)}}{\theta}\geq\frac{1}{\theta N}\densityari{\TB^{*}}{\TR}$ is from Eq.~\eqref{eq:maxblock:density}. Hence, $\densityari{\tilde{\TB}}{\TR}\geq\frac{1}{\theta N}\densityari{\TB^{*}}{\TR}$ holds. 
\end{proof}

\subsubsection{\change{Theoretical Comparison with \mzoom and \mbiz \cite{shin2018fast}.}}
\change{
	While \method requires only $O(\sum_{n=1}^{N}|\TR_{n}|)$ memory space (see Theorem~\ref{thm:space}), which does not depend on the number of tuples (i.e., $|\TR|$), \mzoom and \mbiz require additional $O(N|\TR|)$ space for storing all tuples in main memory. 
	The worst-case time complexity of \method is $O(kN^2|\TR|\min(\log_{\theta}L, L))$ (see Theorem~\ref{thm:time:worst}), and it is slightly higher than that of \mzoom, which is $O(kN|\TR|\log L)$. Empirically, however, \method is up to $7\times$ faster than \mzoom, as we show in Section~\ref{sec:experiments}.
	The main reason is that \method reads and writes tuples only sequentially, allowing efficient caching based on spatial locality. On the other hand, \mzoom requires tuples to be stored and accessed in hash tables, making efficient caching difficult.\footnote{\change{\mzoom repeats retrieving all tuples with a given attribute value, and thus it requires storing and accessing tuples in hash tables for quick retrievals.}}
	The time complexity of \mbiz depends on the number of iterations until reaching a local optimum, and there is no known upper bound on the number of iterations tighter than $O(2^{(\sum_{n=1}^{N}|\TR_{n}|)})$.
	If $\densityarinoarg$ is used, \mzoom and \mbiz\footnote{\change{We assume that \mbiz uses the outputs of \mzoom as its initial states, as suggested in \cite{shin2018fast}.}} give an approximation ratio of $N$, which is the approximation ratio of \method when $\theta$ is set to $1$ (see Theorem~\ref{thm:accuracy:guarantee}).
}

\subsection{MapReduce Implementation}
\label{sec:method:mr}

We present our \mapreduce implementation of \method, assuming that tuples in relations are stored in a distributed file system.
Specifically, we describe four \mapreduce algorithms that cover the steps of \method accessing tuples.

{\bf (1) Filtering Tuples.}
In lines~\ref{alg:method:line:remove}-\ref{alg:method:line:refine} of Algorithm~\ref{alg:method} and line~\ref{alg:method:line:single:remove} of Algorithm~\ref{alg:method:single}, \method filters the tuples satisfying the given conditions.
These steps are done by the following map-only algorithm, where we broadcast the data used in each condition (e.g., $\{\TB_{n}\}_{n=1}^{N}$ in line~\ref{alg:method:line:remove} of Algorithm~\ref{alg:method}) to mappers using the distributed cache functionality.
\begin{itemize}
	\item Map-stage: Take a tuple $t$ (i.e., $\langle t[A_{1}],..., t[A_{N}]$, $t[X]\rangle$) and emit $t$ if $t$ satisfies the given condition. Otherwise, the tuple is ignored. 
\end{itemize}

{\bf (2) Computing Attribute-value Masses.} Line~\ref{alg:method:line:single:mass} of Algorithm~\ref{alg:method:single} is performed by the following algorithm, where we reduce the amount of shuffled data by combining the intermediate results within each mapper.
\begin{itemize}
	\item Map-stage: Take a tuple $t$ (i.e., $\langle t[A_{1}],..., t[A_{N}]$, $t[X]\rangle$) and emit $N$ key/value pairs $\{\langle (n,t[A_n]), t[X]\rangle \}_{n=1}^{N}$.
	\item Combine-stage/Reduce-stage: Take $\langle (n,a)$, values$\rangle$ and emit $\langle (n, a)$, sum(values)$\rangle$.
\end{itemize}
Each tuple $\langle (n, a)$, value$\rangle$ of the final output indicates that $\mass{\TB(a,n)}=$value.

{\bf (3) Computing Mass.} Line~\ref{alg:method:line:mass} of Algorithm~\ref{alg:method} can be performed by the following algorithm, where we reduce the amount of shuffled data by combining the intermediate results within each mapper.
\begin{itemize}
	\item Map-stage: Take a tuple $t$ (i.e., $\langle t[A_{1}],..., t[A_{N}]$, $t[X]\rangle$) and emit  $\langle 0,t[X]\rangle$.
	\item Combine-stage/Reduce-stage: Take $\langle0$, values$\rangle$ and emit $\langle0$, sum(values)$\rangle$.
\end{itemize}
The value of the final tuple corresponds to $\mass{\TR}$.

{\bf (4) Computing Attribute-value Sets.} Line~\ref{alg:method:line:distinct} of Algorithm~\ref{alg:method} can be performed by the following algorithm, where we reduce the amount of shuffled data by combining the intermediate results within each mapper.
\begin{itemize}
	\item Map-stage: Take a tuple $t$ (i.e., $\langle t[A_{1}],..., t[A_{N}]$, $t[X]\rangle$) and emit $N$ key/value pairs $\{\langle (n,t[A_n]),0\rangle\}_{n=1}^{N}$.
	\item Combine-stage/Reduce-stage: Take $\langle (n,a)$, values$\rangle $ and emit $\langle$($n$, $a$), $0\rangle.$
\end{itemize}
Each tuple $\langle $($n$, $a$), $0\rangle $ of the final output indicates that $a$ is a member of $\TR_{n}$.


\section{Results and Discussion}
\label{sec:experiments}
We designed and conducted experiments to answer the following questions:
\begin{itemize}
	\item {\bf Q1. Memory Efficiency}: How much memory space does \method require for analyzing real-world tensors? How large tensors can \method handle?
	\item {\bf Q2. Speed and Accuracy \change{in Dense-subtensor Detection}}: How rapidly and accurately does \method identify dense subtensors? Does \method outperform its best competitors?
	\item {\bf Q3. Scalability}:  Does \method scale linearly with all aspects of data? Does \method scale out?
	\item {\bf Q4. Effectiveness \change{in Anomaly Detection}}: Which anomalies does \method detect in real-world tensors?  
	\item {\bf Q5. Effect of $\theta$}:  How does the mass-threshold parameter $\theta$ affect the speed and accuracy of \method \change{in dense-subtensor detection}?
	\item {\bf Q6. Effect of $\alpha$}: How does the parameter $\alpha$ in density metric $\densitysurpnoarg$ affect subtensors that \method detects?
\end{itemize}

\subsection{Experimental Settings}
\label{sec:experiments:settings}

\subsubsection{Machines}
We ran all serial algorithms on a machine with 2.67GHz Intel Xeon E7-8837
CPUs and 1TB memory.
We ran \mapreduce algorithms on a 40-node Hadoop cluster, where each node has an Intel Xeon E3-1230 3.3GHz CPU and 32GB memory.

\subsubsection{Datasets}
We describe the real-world and synthetic tensors used in our experiments.
Real-world tensors are categorized into four groups: (a) Rating data (SWM, Yelp, Android, Netflix, and YahooM.), (b) Wikipedia revision histories (KoWiki and EnWiki), (c) Temporal social networks (Youtube and SMS), and (d) TCP dumps (DARPA and AirForce).
Some statistics of these datasets are summarized in Table~\ref{tab:data:real}.

{\bf Rating data.} Rating data are relations with schema (\underline{\smash{user}}, \underline{\smash{item}}, \underline{\smash{timestamp}}, \underline{\smash{score}}, \#ratings). Each tuple ($u$,$i$,$t$,$s$,$r$) indicates that user $u$ gave item $i$ score $s$, $r$ times, at timestamp $t$. In the SWM dataset \cite{akoglu2013opinion}, the timestamps are in dates, and the items are entertaining software from a popular online software marketplace. In the Yelp dataset, the timestamps are in dates, and the items are businesses listed on Yelp, a review site. In the Android dataset \cite{mcauley2015inferring}, the timestamps are hours, and the items are Android apps on Amazon, an online store. In the Netflix dataset \cite{bennett2007netflix}, the timestamps are in dates, and the items are movies listed on Netflix, a movie rental and streaming service. In the YahooM. dataset \cite{dror2012yahoo},  the timestamps are in hours, and the items are musical items listed on Yahoo! Music, a provider of various music services.

{\bf Wikipedia revision history.} Wikipedia revision histories are relations with schema (\underline{\smash{user}}, \underline{\smash{page}}, \underline{\smash{timestamp}}, \#revisions). Each tuple ($u$,$p$,$t$,$r$) indicates that user $u$ revised page $p$, $r$ times, at timestamp $t$ (in hour) in Wikipedia, a crowd-sourcing online encyclopedia. In the KoWiki dataset, the pages are from Korean Wikipedia. In the EnWiki dataset, the pages are from English Wikipedia.

{\bf Temporal social networks.} Temporal social networks are relations with schema (\underline{\smash{source}}, \underline{\smash{destination}}, \underline{\smash{timestamp}}, \#interactions). Each tuple ($s$,$d$,$t$,$i$) indicates that user $s$ interacts with user $d$, $i$ times, at timestamp $t$. In the Youtube dataset \cite{mislove-2007-socialnetworks}, the timestamps are in hours, and the interactions are becoming friends on Youtube, a video-sharing website. In the SMS dataset, the timestamps are in hours, and the interactions are sending text messages.

{\bf TCP Dumps.} The DARPA dataset \cite{lippmann2000evaluating}, collected by the Cyber Systems and Technology Group in 1998, is a relation with schema (\underline{\smash{source IP}}, \underline{\smash{destination IP}}, \underline{\smash{timestamp}}, \#connections). Each tuple ($s$,$d$,$t$,$c$) indicates that $c$ connections were made from IP $s$ to IP $d$ at timestamp $t$ (in minutes).
The AirForce dataset, used for KDD Cup 1999, is a relation with schema (\underline{\smash{protocol}}, \underline{\smash{service}}, \underline{\smash{src bytes}}, \underline{\smash{dst bytes}}, \underline{\smash{flag}}, \underline{\smash{host count}}, \underline{\smash{srv count}}, \#connections). 
The description of each attribute is as follows:
\begin{compactitem}
	\item protocol: type of protocol (tcp, udp, etc.).
	\item service: service on destination (http, telnet, etc.).
	\item src bytes: bytes sent from source to destination.
	\item dst bytes: bytes sent from destination to source.
	\item flag: normal or error status.
	\item host count: number of connections made to the same host in the past two
	seconds.
	\item srv count: number of connections made to the same service in the past two
	seconds.
	\item \#connections: number of connections with the given dimension
	attribute values.
\end{compactitem}


\begin{table}[t]
	\centering
	\caption{\label{tab:data:real} Summary of real-world datasets.} 
	\begin{tabular}{l|l|l}
		\toprule
		{\bf \ \quad Name} & {\bf \qquad\qquad Volume} & {\bf \ \#Tuples} \\
		\midrule
		\multicolumn{3}{l}{{Rating data (\underline{user}, \underline{item}, \underline{\smash{timestamp}}, \underline{\smash{rating}}, \#reviews)}}\\
		\midrule
		SWM \cite{akoglu2013opinion} & 967K $\times$ 15.1K $\times$ 1.38K $\times$ 5 & 1.13M \\
		Yelp & 552K $\times$ 77.1K $\times$ 3.80K $\times$ 5 & 2.23M \\
		Android \cite{mcauley2015inferring} & 1.32M $\times$ 61.3K $\times$ 1.28K $\times$ 5 & 2.64M \\
		Netflix \cite{bennett2007netflix} & 480K $\times$ 17.8K $\times$ 2.18K $\times$ 5 & 99.1M \\
		YahooM. \cite{dror2012yahoo} & 1.00M $\times$ 625K $\times$ 84.4K $\times$ 101 & 253M \\
		\midrule
		\multicolumn{3}{l}{Wiki revision histories (\underline{user}, \underline{\smash{page}}, \underline{\smash{timestamp}}, \#revisions)}\\
		\midrule
		KoWiki & 470K $\times$ 1.18M $\times$ 101K & 11.0M \\
		EnWiki & 44.1M $\times$ 38.5M $\times$  129K & 483M\\
		\midrule
		\multicolumn{3}{l}{{Social networks (\underline{user}, \underline{user}, \underline{\smash{timestamp}}, \#interactions)}}\\
		\midrule
		Youtube \cite{mislove-2007-socialnetworks} & 3.22M $\times$ 3.22M $\times$ 203 & 18.7M \\
		SMS & 1.25M $\times$ 7.00M $\times$ 4.39K & 103M \\
		\midrule
		\multicolumn{3}{l}{{TCP dumps (\underline{\smash{src IP}}, \underline{\smash{dst IP}}, \underline{\smash{timestamp}}, \#connections)}}\\
		\midrule
		DARPA \cite{lippmann2000evaluating} & 9.48K $\times$ 23.4K $\times$ 46.6K & 522K \\
		\midrule
		\multicolumn{3}{l}{{TCP dumps (\underline{\smash{protocol}}, \underline{\smash{service}}, \underline{\smash{src bytes}}, $\cdots$, \#connections) }}\\
		\midrule
		\multirow{2}{*}{AirForce} & 3 $\times$ 70 $\times$ 11 $\times$ 7.20K  & \multirow{2}{*}{648K}  \\
		\ & $\times$ 21.5K $\times$ 512 $\times$ 512 & \\
		\bottomrule
	\end{tabular}
\end{table}

{\bf Synthetic Tensors:} We used synthetic tensors for scalability tests.
Each tensor was created by generating a random binary tensor and injecting ten random dense subtensors, whose volumes are $10^{N}$ and densities (in terms of $\densityarinoarg$) are between 10$\times$ and 100$\times$ of that of the entire tensor.

\begin{figure}[t]

\centering
\includegraphics[width=\linewidth]{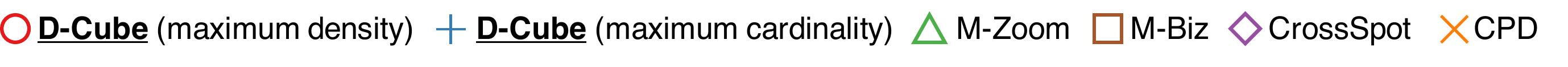} \\
\begin{tabular}{ccccc}
	\begin{minipage}{0.18\textwidth}
		\center
		\includegraphics[width=\linewidth]{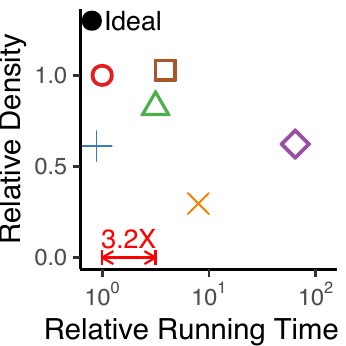}
	\end{minipage} 
	& \begin{minipage}{.18\textwidth}
		\center
		\includegraphics[width=\linewidth]{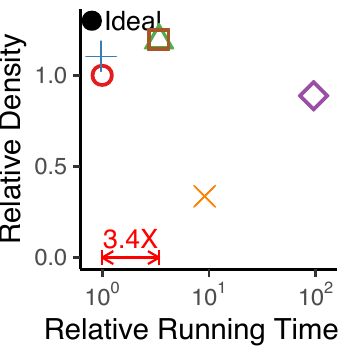}
	\end{minipage} 
	& \begin{minipage}{.18\textwidth}
		\center
		\includegraphics[width=\linewidth]{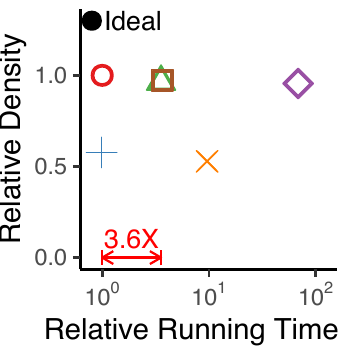}
	\end{minipage} 
	& \begin{minipage}{.18\textwidth}
		\center
		\includegraphics[width=\linewidth]{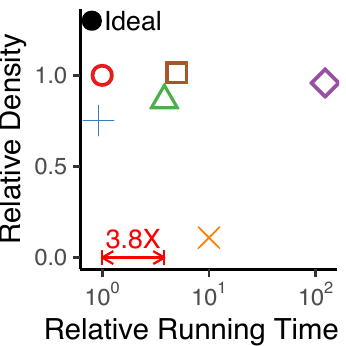}
	\end{minipage} 
	& \begin{minipage}{.18\textwidth}
		\center
		\includegraphics[width=\linewidth]{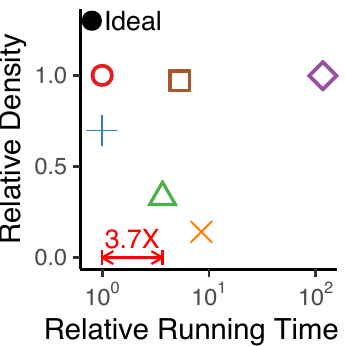}
	\end{minipage} \vspace{1mm} \\
	A. $\densitysuspnoarg$ & B. $\densityarinoarg$ & C. $\densitygeonoarg$ & D. $\densitysurpalpha{1}$ & E. $\densitysurpalpha{10}$
\end{tabular}
	\caption{\label{fig:tradeoff:summary}
		\change{{\bf \method rapidly and accurately detects dense subtensors.} In each plot, points indicate the densities of subtensors detected by different methods and their running times, averaged over all considered real-world tensors.
			Upper-left region indicates better performance. \method is about {\bf 3.6$\times$ faster} than the second fastest method \mzoom. Moreover, \method with the maximum density consistently finds dense subtensors regardless of target density measures.}
	}
\end{figure}

\begin{figure}[t]
	\centering
\includegraphics[width=0.7\linewidth]{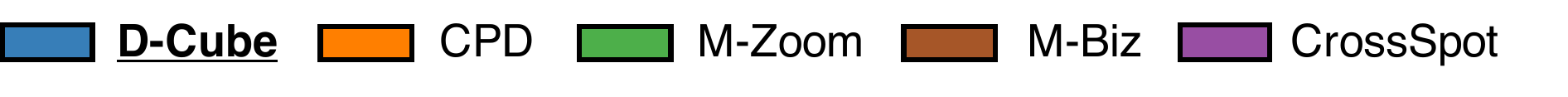} \\
\includegraphics[width=\linewidth]{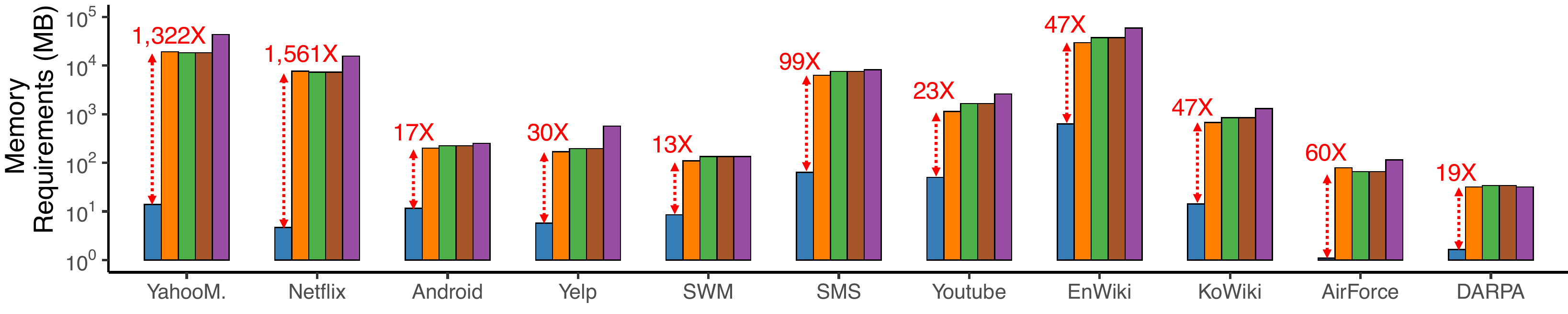}
	\caption{\label{fig:memory}
		{\bf \method is memory efficient.}
		\method requires up to {\bf 1,561$\times$ less memory} than the second most memory-efficient method.
	}
\end{figure}

\subsubsection{Implementations}
We implemented the following dense-subtensor detection methods for our experiments.
\begin{itemize}
	\item \method (Proposed): We implemented \method in Java with Hadoop 1.2.1. We set the mass-threshold parameter $\theta$ to $1$ and used the maximum density policy for dimension selection, unless otherwise stated.
	\item \change{\mzoom \cite{shin2018fast} and \mbiz \cite{shin2018fast}: We used the open-source Java implementations of \mzoom and \mbiz \footnote{https://github.com/kijungs/mzoom}. As suggested in \cite{shin2018fast}, we used the outputs of \mzoom as the initial states in \mbiz.}
	\item \cross \cite{jiang2015general}: We used a Java implementation of the open-source implementation of \cross \footnote{https://github.com/mjiang89/CrossSpot}. Although \cross was originally designed to maximize $\densitysuspnoarg$, we used its variants that directly maximize the density metric compared in each experiment.
	We used CPD as the seed selection method of \cross as in \cite{shin2018fast}.
	\item CPD (CP Decomposition): Let $\{\MA^{(n)}\}_{n=1}^{N}$ be the factor matrices obtained by CP Decomposition \cite{kolda2009tensor}. 
	The $i$-th dense subtensor is composed by every attribute value $a_{n}$ whose corresponding element in the $i$-th column of $A^{(n)}$ is greater than or equal to $1/\sqrt{|\TR_{n}|}$. We used the Tensor Toolbox\footnote{http://www.sandia.gov/~tgkolda/TensorToolbox/} for CP Decomposition.
	\item MAF \cite{maruhashi2011multiaspectforensics}: We used the Tensor Toolbox for CP Decomposition, which MAF is largely based on.
\end{itemize}

\subsection{Q1. Memory Efficiency}
\label{sec:experiments:memory}

We compare the amount of memory required by different methods for handling the real-world datasets.
As seen in Figure~\ref{fig:memory}, \method, which does not require tuples to be stored in memory, needed up to {\bf1,561$\times$ less memory} than the second most memory-efficient method, which stores tuples in memory.

Due to its memory efficiency, \method successfully handled {\bf 1,000$\times$ larger data} than its competitors within a memory budget.
We ran methods on 3-way synthetic tensors with different numbers of tuples (i.e., $|\TR|$), with a memory budget of 16GB per machine.
In every tensor, the cardinality of each dimension attribute was $1/1000$ of the number of tuples, i.e., $|\TR_{n}|=|\TR|/1000$, $\forall n\in[N]$.
Figure~\ref{fig:crown}(a) in Section~\ref{alg:method} shows the result.
The \hadoop implementation of \method successfully spotted dense subtensors in a  tensor with $10^{11}$ tuples ({\bf 2.6TB}), and the serial version of \method successfully spotted dense subtensors in a tensor with $10^{10}$ tuples ({\bf 240GB}), which was the largest tensor that can be stored on a disk.
However, all other methods ran out of memory even on a tensor with $10^{9}$ tuples (21GB).

\subsection{Q2. Speed and Accuracy \change{in Dense-subtensor Detection}}
\label{sec:experiments:speed}

We compare how rapidly and accurately \method (the serial version) and its competitors detect dense subtensors in the real-world datasets.
We measured the wall-clock time (average over three runs) taken for detecting three subtensors by each method, and we measured the maximum density of the three subtensors found by each method using different density measures in Section~\ref{sec:prelim:density}.
For this experiment, we did not limit the memory budget so that every method can handle every dataset.
\method also utilized extra memory space by caching tuples in memory, as explained in Section~\ref{sec:method:alg:impl}.

Figure~\ref{fig:tradeoff:summary} shows the results averaged over all considered datasets.\footnote{
	In each dataset, we measured the relative running time of each method  (compared to the running time of \method with the maximum density policy) and the relative density of detected dense subtensors (compared to the density of subtensors detected by \method with the maximum density policy). Then, we averaged them over all considered datasets.}
The results in each data set can be found in the appendix.
\method provided the best trade-off between speed and accuracy.
Specifically, {\bf \method was up to 7$\times$ faster} (on average 3.6$\times$ faster) than the second fastest method \mzoom.
Moreover, {\bf \method with the maximum density policy spotted high-density subtensors consistently regardless of target density measures}.
\change{Specifically, on average, \method with the maximum density policy was most accurate in dense-subtensor detection when $\densitygeonoarg$ and $\densitysurpalpha{10}$ were used; and it was second most accurate when $\densitysuspnoarg$ and $\densitysurpalpha{1}$ were used.
	When $\densityarinoarg$ was used, \mzoom, \mbiz, and \method with the maximum cardinality policy were on average more accurate than \method with the maximum density policy.
}
Although MAF does not appear in Figures~\ref{fig:tradeoff:summary}, it consistently provided sparser subtensors than CPD with similar speed.

\subsection{Q3. Scalability}
\label{sec:experiments:scalability:data}

We show that \method scales (sub-)linearly with every input factor, i.e., the number of tuples, the number of dimension attributes, and the cardinality of dimension attributes, and the number of subtensors that we aim to find.
To measure the scalability with each factor, we started with finding a dense subtensor in a synthetic tensor with $10^{8}$ tuples and $3$ dimension attributes each of whose cardinality is $10^{5}$.
Then, we measured the running time as we changed one factor at a time while fixing the other factors.
The threshold parameter $\theta$ was fixed to $1$.
As seen in Figure~\ref{fig:scalability}, \method scaled linearly with every factor and sub-linearly with the cardinality of attributes even when $\theta$ was set to its minimum value $1$.
This supports our claim in Section~\ref{sec:method:analysis:complexity} that the worst-case time complexity of \method (Theorem~\ref{thm:time:worst}) is too pessimistic.
This linear scalability of \method held both with enough memory budget (blue solid lines in Figure~\ref{fig:scalability}) to store all tuples and with minimum memory budget (red dashed lines in Figure~\ref{fig:scalability}) to barely meet the requirements
although \method was up to 3$\times$ faster in the former case.

We also evaluate the machine scalability of the \mapreduce implementation of \method. 
We measured its running time taken for finding a dense subtensor in a synthetic tensor with $10^{10}$ tuples and $3$ dimension attributes each of whose cardinality is $10^{7}$, as we increased the number of machines running in parallel from 1 to 40.
Figure~\ref{fig:scalability:machine} shows the changes in the running time and the speed-up, which is defined as $T_{1}/T_{M}$ where $T_{M}$ is the running time with $M$ machines.
The speed-up increased near linearly when a small number of machines were used, while it flattened as
more machines were added due to the overhead in the distributed
system.

\begin{figure}[t!]
	\centering
\includegraphics[width=0.85\linewidth]{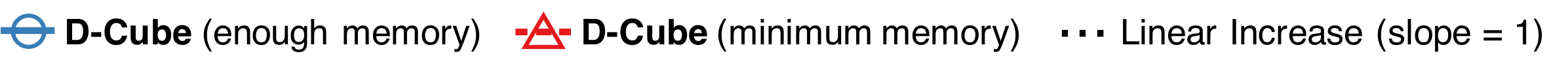} \\
\begin{tabular}{cccc}
	\begin{minipage}{.2\textwidth}
		\center
		\includegraphics[width=\linewidth]{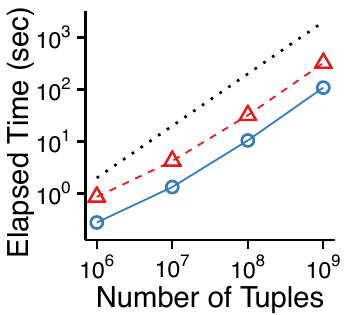}
	\end{minipage} 
	& \begin{minipage}{.2\textwidth}
		\center
		\includegraphics[width=\linewidth]{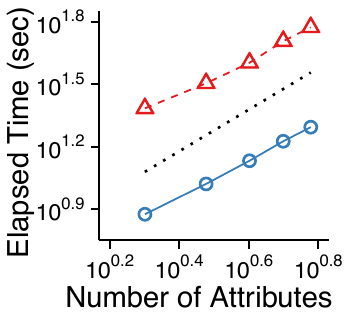}
	\end{minipage} 
	& \begin{minipage}{.2\textwidth}
		\center
		\includegraphics[width=\linewidth]{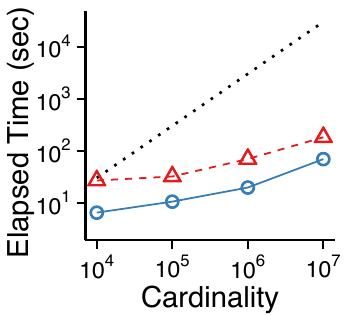}
	\end{minipage} 
	& \begin{minipage}{.2\textwidth}
		\center
		\includegraphics[width=\linewidth]{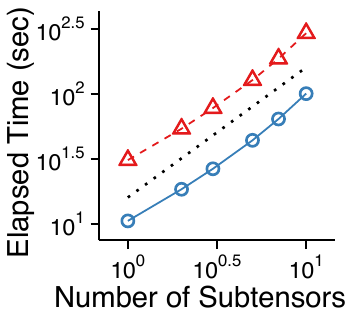}
	\end{minipage}\vspace{1mm} \\
	A. Scalability w.r.t $|\TR|$ & B. Scalability w.r.t $N$ & C. Scalability w.r.t $|\TR_{n}|$ & D.Scalability w.r.t $k$
\end{tabular}
	\caption{\label{fig:scalability}
		{\bf $\method$ scales (sub-)linearly with all input factors} regardless of memory budgets.
	}
\end{figure}

\begin{figure}
	\centering
\begin{tabular}{cc}
	\begin{minipage}{0.5\textwidth}
		\center
		\includegraphics[width=0.4\linewidth]{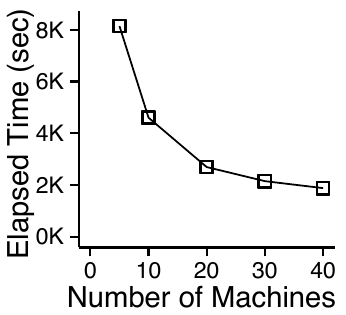}
	\end{minipage} 
	& \begin{minipage}{0.5\textwidth}
		\center
		\includegraphics[width=0.4\linewidth]{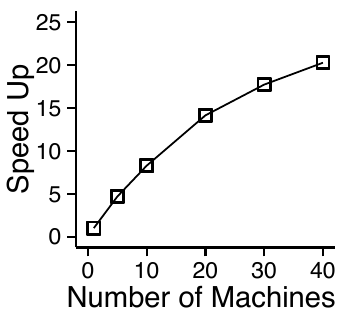}
	\end{minipage}  \vspace{1mm}\\
	A. Elapsed Time & B. Speed Up 
\end{tabular}
	\caption{\label{fig:scalability:machine}
		{\bf $\method$ scales out.}
		The \mapreduce implementation of \method is speeded up 8$\times$ with 10 machines, and 20$\times$ with 40 machines.}
\end{figure}

\subsection{Q4. Effectiveness \change{in Anomaly Detection}}
\label{sec:experiments:effective}

We demonstrate the effectiveness of \method in four applications using real-world tensors.

\subsubsection{Network Intrusion Detection from TCP Dumps} 
\method detected network attacks from TCP dumps accurately by spotting corresponding dense subtensors.
We consider two TCP dumps that are modeled differently.
The DARPA dataset is a 3-way tensor where the dimension attributes are source IPs, destination IPs, and timestamps in minutes; and the measure attribute is the number of connections.
The AirForce dataset, which does not include IP information, is a 7-way tensor where the measure attribute is the same but the dimension attributes are the features of the connections, including protocols and services.
Both datasets include labels indicating whether each connection is malicious or not.

Figure~\ref{fig:crown}(c) in Section~\ref{sec:intro} lists the five densest subtensors (in terms of $\densitygeonoarg$) found by \method in each dataset.
Notice that the dense subtensors are mostly composed of various types of network attacks.
Based on this observation, we classified each connection as malicious or benign based on the density of the densest subtensor including the connection (i.e., the denser the subtensor including a connection is, the more suspicious the connection is).
\change{This led to high area under the ROC curve (AUROC) as seen in Table~\ref{tab:intrusion}, where we report the AUROC when each method was used with the density measure giving the highest AUROC.
	In both datasets, using \method resulted in the highest AUROC.}

\subsubsection{Synchronized Behavior Detection in Rating Data}
\method spotted suspicious synchronized behavior accurately in rating data.
Specifically, we assume an attack scenario where fraudsters in a review site, who aim to boost (or lower) the ratings of the set of items, create multiple user accounts and give the same score to the items within a short period of time.
This lockstep behavior forms a dense subtensor with volume  (\# fake accounts $\times$ \# target items $\times$ 1 $\times$ 1) in the rating dataset, whose dimension attributes are users, items, timestamps, and rating scores.

\begin{table}[t]
	\centering
	\caption{\label{tab:intrusion} \label{tab:review}
		{\bf \method spots network attacks and synchronized behavior fastest and most accurately} from TCP dumps and rating datasets, respectively.
	}
	\scalebox{0.87}{
		\begin{tabular}{c|cc|cc|cc|cc}
			\toprule
			Datasets  & \multicolumn{2}{c|}{AirForce} & \multicolumn{2}{c|}{DARPA}   & \multicolumn{2}{c|}{Android} & \multicolumn{2}{c}{Yelp} \\
			\midrule
			& Elapsed  & \change{AUROC} & Elapsed & \change{AUROC} & Elapsed  & \change{Recall @}& Elapsed & \change{Recall @}\\
			& Time (sec) &  & Time (sec) & & Time (sec) & \change{Top-10} & Time (sec) & \change{Top-10} \\
			\midrule
			CPD \cite{kolda2009tensor} &  413.2 & 0.854 &  105.0 & 0.926 & 59.9 & 0.54 & 47.5 & 0.52 \\
			MAF \cite{maruhashi2011multiaspectforensics}  & 486.6 & 0.912 & 102.4 & 0.514 & 95.0 & 0.54 & 49.4 & 0.52 \\
			\cross \cite{jiang2015general} & 575.5 & 0.924  & 132.2 & 0.923  & 71.3 & 0.54  & 56.7 & 0.52  \\
			\mzoom \cite{shin2018fast} & 27.7 & 0.975 & 22.7 &  0.923 & 28.4 & 0.70 & 17.7 & 0.30 \\
			\change{\mbiz \cite{shin2018fast}} & \change{29.8} & \change{0.977} & \change{22.7}  & \change{0.923} & \change{30.6} & \change{0.70} & \change{19.5} & \change{0.30} \\
			\midrule
			{\bf \method} & {\bf 15.6} & {\bf 0.987} &  {\bf 9.1} & {\bf 0.930} & {\bf 7.0} & {\bf 0.90} & {\bf 4.9} &  {\bf 0.60} \\
			\bottomrule
		\end{tabular}
	}
\end{table}


\begin{table}[t]
	\centering
	\caption{\label{tab:spam} \textbf{\method successfully detects spam reviews} in the SWM dataset.}
	\begin{tabular}{clc|clc}
		\toprule
		\multicolumn{3}{c|}{Subtensor 1 (100\% spam)} & \multicolumn{3}{c}{Subtensor 2 (100\% spam)}  \\
		\midrule
		User & Review &  Date & User & Review  & Date \\
		\midrule
		Ti* & {\scriptsize type in *** and you will get ... } & Mar-4 & Sk* & {\scriptsize invite code***, referral  ...} & Apr-18  \\
		Fo* & {\scriptsize type in for the bonus code: ...} & Mar-4 & fu* & {\scriptsize use my code for bonus ...} & Apr-18 \\
		dj* & {\scriptsize typed in the code: ***  ...}  & Mar-4 & Ta* & {\scriptsize  enter the code *** for ...} & Apr-18 \\
		Di* & {\scriptsize enter this code to start with ...}  & Mar-4 & Ap* & {\scriptsize bonus code *** for points ...} & Apr-18 \\
		Fe* & {\scriptsize enter code: *** to win even ...}  & Mar-4 & De* & {\scriptsize bonus code: ***, be one ...} & Apr-18 \\
		\bottomrule
	\end{tabular}
	\begin{tabular}{clc}
		\toprule
		\multicolumn{3}{c}{Subtensor 3 (at least 48\% spam)} \\
		\midrule
		User & Review & Date \\
		\midrule
		Mr* & {\scriptsize entered this code and got ... } & Nov-23 \\
		Max* & {\scriptsize enter the bonus code: *** ... } & Nov-23\\
		Je* & {\scriptsize enter *** when it asks... } & Nov-23\\
		Man* & {\scriptsize just enter *** for a boost ... } & Nov-23\\
		Ty* & {\scriptsize enter *** ro receive a ...} & Nov-23\\
		\bottomrule
	\end{tabular}
\end{table}

We injected 10 such random dense subtensors whose volumes varied from 15$\times$15$\times$1$\times$1 to 60$\times$60$\times$1$\times$1 in the Yelp and Android datasets.
We compared the ratio of the injected subtensors detected by each dense-subtensor detection method.
We considered each injected subtensor as overlooked by a method if the subtensor did not belong to any of the top-$10$ dense subtensors spotted by the method or it was hidden in a natural dense subtensor at least 10 times larger than the injected subtensor.
\change{That is, we measured the recall at top $10$.}
We repeated this experiment 10 times, and the averaged results are summarized in Table~\ref{tab:review}.
For each method, we report the results with the density measure giving the highest \change{recall}.
In both datasets, \method detected a largest number of the injected subtensors.
Especially, in the Android dataset, \method detected 9 out of the 10 injected subtensors, while the second best method detected only 7 injected subtensors on average. 

\subsubsection{Spam-Review Detection in Rating Data}
\method successfully spotted spam reviews in the SWM dataset, which contains reviews from an online software marketplace. We modeled the SWM dataset as a 4-way tensor whose dimension attributes are users, software, ratings, and timestamps in dates, and we applied \method (with $\density=\densityarinoarg$) to the dataset.
Table~\ref{tab:blocks:summary} shows the statistics of the top-$3$ dense subtensors.
Although ground-truth labels were not available, as the examples in Table~\ref{tab:spam} show, all the reviews composing the first and second dense subtensors were obvious spam reviews. In addition, at least $48\%$ of the reviews composing the third dense subtensor were obvious spam reviews.

\subsubsection{Anomaly Detection in Wikipedia Revision Histories}
\method detected interesting anomalies in Wikipedia revision histories, which we model as 3-way tensors whose dimension attributes are users, pages, and timestamps in hours.
Table~\ref{tab:blocks:summary} gives the statistics of the top-$3$ dense subtensors detected by \method (with $\density=\densityarinoarg$ and the maximum cardinality policy) in the KoWiki dataset and by \method (with $\density=\densitygeonoarg$ and the maximum density policy) in the EnWiki dataset.
All three subtensors detected in the KoWiki dataset indicated edit wars. 
For example, the second subtensor corresponded to an edit war where 4 users changed 4 pages, 1,011 times, within 5 hours. 
On the other hand, all three subtensors detected in the Enwiki dataset indicated bot activities. For example, the third subtensor corresponded to 3 bots which edited 1,067 pages 973,747 times. The users composing the top-$5$ dense subtensors in the EnWiki dataset are listed in Table~\ref{tab:bot}. Notice that all of them are bots.

\begin{table}[t]
	\centering
	\caption{\label{tab:blocks:summary}Summary of the dense subtensors that \method detects in the SWM, KoWiki, and EnWiki datasets.}
	\begin{tabular}{cc|cccc}
		\toprule
		Dataset & Order & Volume & Mass & $\densityarinoarg$ & Type \\
		\midrule
		\multirow{3}{*}{SWM}& 1 & 120 & 308  & 44.0 & Spam reviews \\
		& 2 & 612 & 435 & 31.6 & Spam reviews\\
		& 3 & 231,240 & 771 & 20.3 & Spam reviews\\
		\midrule
		\multirow{3}{*}{KoWiki}& 1 & 8 & 546  & 273.0 & Edit war\\
		& 2 & 80 & 1,011 & 233.3 & Edit war\\
		& 3 & 270 & 1,126 & 168.9 & Edit war\\
		\midrule
		\multirow{3}{*}{EnWiki} & 1 & 9.98M & 1.71M &  7,931 & Bot activities \\
		& 2 & 541K &  343K & 4,211 &  Bot activities\\
		& 3 & 23.5M & 973K & 3,395 & Bot activities\\
		\bottomrule
	\end{tabular}
\end{table}


\begin{table}[t]
	\centering
	\caption{\label{tab:bot} \textbf{\method successfully spots bot activities} in the EnWiki dataset.}
	\begin{tabular}{cl}
		\toprule
		Subtensor \# & Users in each subtensor (100\% bots) \\ 
		\midrule
		1 & WP 1.0 bot \\
		2 & AAlertBot \\
		3 & AlexNewArtBot, VeblenBot, InceptionBot\\
		4 & WP 1.0 bot \\
		5 & Cydebot, VeblenBot \\ 
		\bottomrule
	\end{tabular}
\end{table}

\begin{figure}[t]
	\centering
\includegraphics[width=0.5\linewidth]{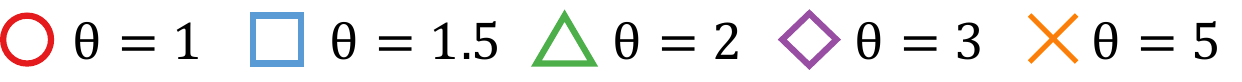} \\
\begin{tabular}{ccccc}
	\begin{minipage}{.18\textwidth}
		\center
		\includegraphics[width=\linewidth]{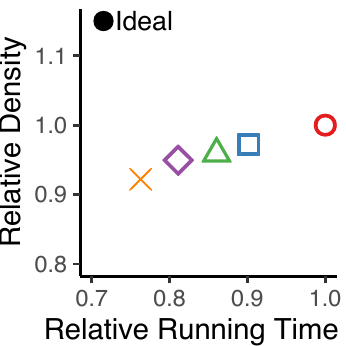}
	\end{minipage} 
	& \begin{minipage}{.18\textwidth}
		\center
		\includegraphics[width=\linewidth]{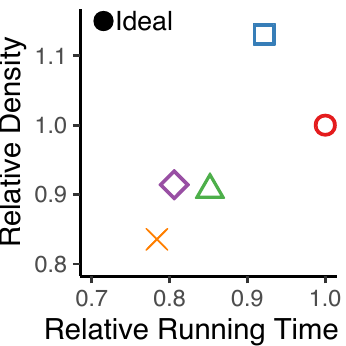}
	\end{minipage} 
	& \begin{minipage}{.18\textwidth}
		\center
		\includegraphics[width=\linewidth]{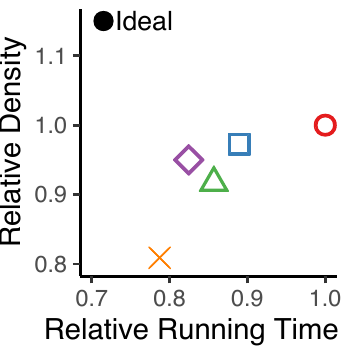}
	\end{minipage} 
	& \begin{minipage}{.18\textwidth}
		\center
		\includegraphics[width=\linewidth]{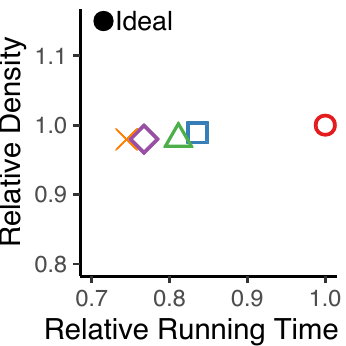}
	\end{minipage} 
	& \begin{minipage}{.18\textwidth}
		\center
		\includegraphics[width=\linewidth]{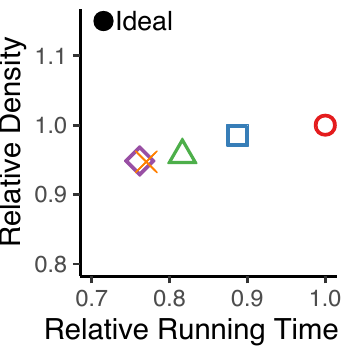}
	\end{minipage} \\
	A. $\densitysuspnoarg$ & B. $\densityarinoarg$ & C. $\densitygeonoarg$ & D. $\densitysurpalpha{1}$ & E. $\densitysurpalpha{10}$
\end{tabular}
	\caption{\label{fig:tradeoff:epsilon}
		{\bf The mass-threshold parameter $\theta$ gives a trade-off between the speed and accuracy of \method in \change{dense-subtensor detection}}. We report the running time and the density of detected subtensors, averaged over all considered real-world datasets. As $\theta$ increases, \method tends to be faster, detecting sparser subtensors.
	}
\end{figure}

\subsection{Q5. Effects of Parameter $\theta$ on Speed and Accuracy \change{in Dense-subtensor Detection}}
\label{sec:experiments:epsilon}

We investigate the effects of the mass-threshold parameter $\theta$ on the speed and accuracy of \method in \change{dense-subtensor detection}. We used the serial version of \method with a memory budget of 16GB, and we measured the relative density of detected subtensors and its running time, as in Section~\ref{sec:experiments:speed}. 
Figure~\ref{fig:tradeoff:epsilon} shows the results averaged over all considered datasets. Different $\theta$ values provided a trade-off between speed and accuracy in dense-subtensor detection. Specifically, increasing $\theta$ tended to make \method faster but also make it detect sparser subtensors. This tendency is consistent with our theoretical analyses (Theorems~\ref{thm:time:worst}-\ref{thm:accuracy:guarantee} in Section~\ref{sec:method:analysis}). The sensitivity of the \change{dense-subtensor detection} accuracy to $\theta$ depended on the used density measures. Specifically, the sensitivity was lower with $\densitysurpnoarg$ than with the other density measures. \ \\

\subsection{Q6. Effects of Parameter $\alpha$ in $\densitysurpnoarg$ on Subtensors Detected by \method}
\label{sec:exp:surp}
We show that the dense subtensors detected by \method are configurable by the parameter $\alpha$ in density measure $\densitysurpnoarg$. Figure~\ref{fig:tradeoff:alpha} shows the volumes and masses of subtensors detected in the Youtube and Yelp datasets by \method when $\densitysurpnoarg$ with different $\alpha$ values were used as the density metrics.
With large $\alpha$ values, \method tended to spot relatively small but compact subtensors.
With small $\alpha$ values, however, \method tended to spot relatively sparse but large subtensors.
Similar tendencies were obtained with the other datasets.

\begin{figure}[t]
	\centering
\includegraphics[width=0.48\linewidth]{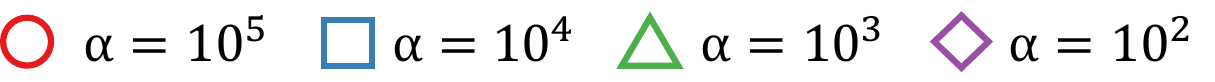} \\
\begin{tabular}{cc}
	\begin{minipage}{.5\textwidth}
		\center
		\includegraphics[width=0.4\linewidth]{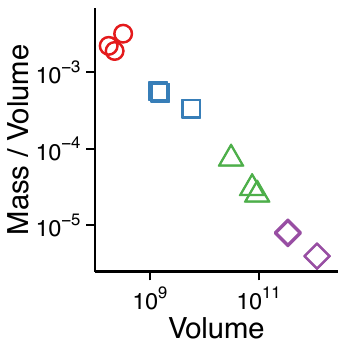}
	\end{minipage} 
	& \begin{minipage}{.5\textwidth}
		\center
		\includegraphics[width=0.4\linewidth]{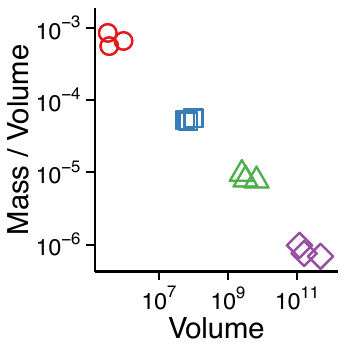}
	\end{minipage} \vspace{1mm}  \\
	A. Youtube & B. Yelp
\end{tabular}
	\caption{\label{fig:tradeoff:alpha}
		{\bf Subtensors detected by \method are configurable} by the parameter $\alpha$ in density metric $\densitysurpnoarg$. As $\alpha$ increases, \method spots smaller but more compact subtensors. 
	}
\end{figure}

\section{Conclusion}
\label{sec:conclusion}
In this work, we propose \method, a disk-based dense-subtensor detection method, to deal with disk-resident tensors too large to fit in main memory.
\method is optimized to minimize disk I/Os while providing a guarantee on the quality of the subtensors it finds.
Moreover, we propose a distributed version of \method running on \mapreduce for terabyte-scale or larger data distributed across multiple machines.
In summary, \method achieves the following advantages over its state-of-the-art competitors:
\begin{itemize}
	\item {\bf Memory Efficient:} 
	\method handles {\it 1,000$\times$} larger data ({\it 2.6TB}) by reducing memory usage up to {\it 1,561$\times$} compared to in-memory algorithms (Section~\ref{sec:experiments:memory}).
	\item {\bf Fast:} Even when data fit in memory, \method is up to {\it 7$\times$} faster than its competitors (Section~\ref{sec:experiments:speed}) with near-linear scalability (Section~\ref{sec:experiments:scalability:data}).
	\item {\bf Provably Accurate:}  \method is one of the methods guaranteeing the best approximation ratio (Theorem~\ref{thm:accuracy:guarantee}) \change{in dense-subtensor detection} and spotting the densest subtensors in practice (Section~\ref{sec:experiments:speed}). 
	\item {\bf Effective:} \method was most accurate in two applications:  detecting network attacks from TCP dumps and lockstep behavior in rating data (Section~\ref{sec:experiments:effective}).
\end{itemize}
\noindent{\bf Reproducibility:} The code and data used in the paper are available at {\bf \url{http://dmlab.kaist.ac.kr/dcube}}.

%

\section*{Acknowledgments}
This research was supported by National Research Foundation of Korea (NRF) grant funded by the Korea government (MSIT) (No. NRF-2020R1C1C1008296) and Institute of Information \& Communications Technology Planning \&
Evaluation (IITP) grant funded by the Korea government (MSIT) (No. 2019-0-00075, Artificial Intelligence Graduate School Program (KAIST)).
This research was also supported by the National Science Foundation under Grant No. CNS-1314632 and IIS-1408924. 
This research was sponsored by the Army Research Laboratory and was accomplished under Cooperative Agreement Number W911NF-09-2-0053. 
Any opinions, findings, and conclusions or recommendations expressed in this material are those of the author(s) and do not necessarily reflect the views of the National Science Foundation, or other funding parties. The U.S. Government is authorized to reproduce and distribute reprints for Government purposes notwithstanding any copyright notation here on.

\bibliographystyle{plain}
\bibliography{kijung}

\newpage 

\section*{Appendix: Additional Figures}

Figures~\ref{fig:tradeoff:lac_sms}-\ref{fig:tradeoff:yahoo} show the speed and accuracy of the considered algorithms in $11$ different datasets.

\vspace{10mm}

\begin{figure}[h]
	\centering
	\includegraphics[width=\linewidth]{FIG/tradeoff_label.pdf} \\
	\begin{tabular}{ccccc}
		\begin{minipage}{0.18\textwidth}
			\center
			\includegraphics[width=\linewidth]{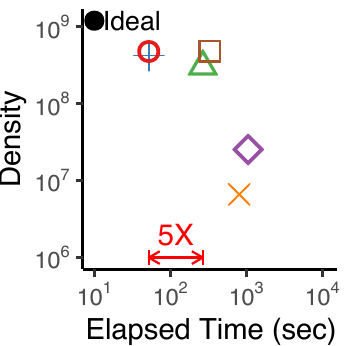}
		\end{minipage} 
		& \begin{minipage}{.18\textwidth}
			\center
			\includegraphics[width=\linewidth]{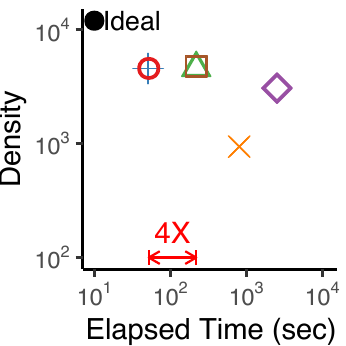}
		\end{minipage} 
		& \begin{minipage}{.18\textwidth}
			\center
			\includegraphics[width=\linewidth]{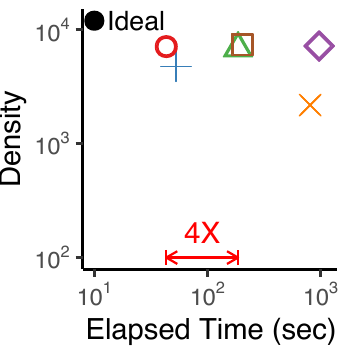}
		\end{minipage} 
		& \begin{minipage}{.18\textwidth}
			\center
			\includegraphics[width=\linewidth]{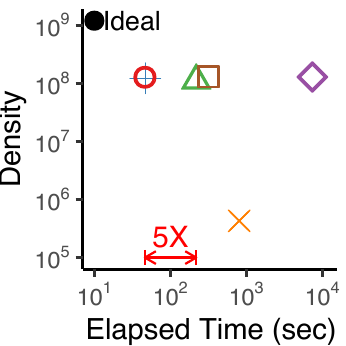}
		\end{minipage} 
		& \begin{minipage}{.18\textwidth}
			\center
			\includegraphics[width=\linewidth]{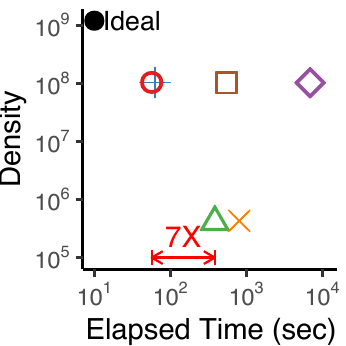}
		\end{minipage} \vspace{1mm} \\
		A. $\densitysuspnoarg$ & B. $\densityarinoarg$ & C. $\densitygeonoarg$ & D. $\densitysurpalpha{1}$ & E. $\densitysurpalpha{10}$
	\end{tabular}
	\caption{\label{fig:tradeoff:lac_sms}
		Speed and accuracy of the algorithms in the SMS dataset.
	}
\end{figure}

\begin{figure}[h]
	\centering
	\includegraphics[width=\linewidth]{FIG/tradeoff_label.pdf} \\
	\begin{tabular}{ccccc}
		\begin{minipage}{0.18\textwidth}
			\center
			\includegraphics[width=\linewidth]{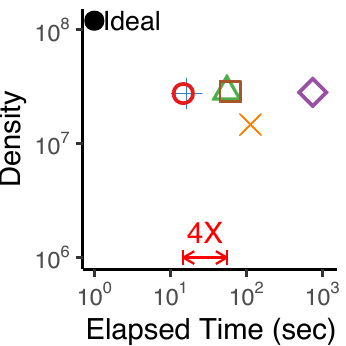}
		\end{minipage} 
		& \begin{minipage}{.18\textwidth}
			\center
			\includegraphics[width=\linewidth]{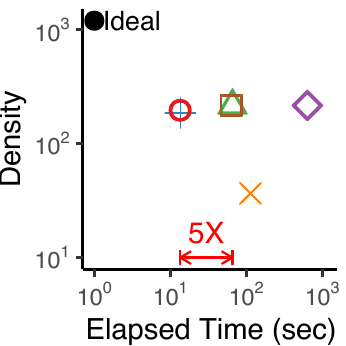}
		\end{minipage} 
		& \begin{minipage}{.18\textwidth}
			\center
			\includegraphics[width=\linewidth]{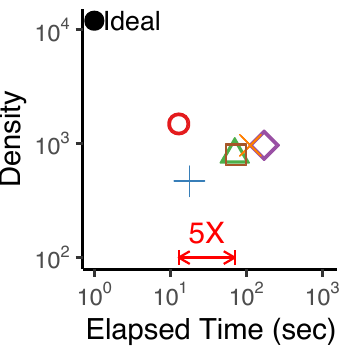}
		\end{minipage} 
		& \begin{minipage}{.18\textwidth}
			\center
			\includegraphics[width=\linewidth]{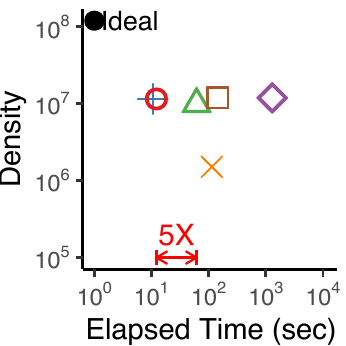}
		\end{minipage} 
		& \begin{minipage}{.18\textwidth}
			\center
			\includegraphics[width=\linewidth]{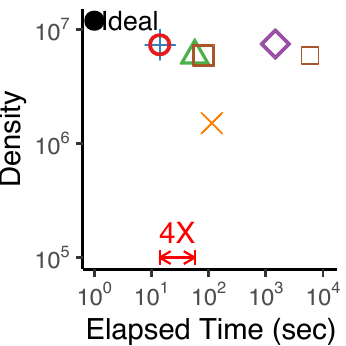}
		\end{minipage} \vspace{1mm} \\
		A. $\densitysuspnoarg$ & B. $\densityarinoarg$ & C. $\densitygeonoarg$ & D. $\densitysurpalpha{1}$ & E. $\densitysurpalpha{10}$
	\end{tabular}
	\caption{\label{fig:tradeoff:youtube}
		Speed and accuracy of the algorithms in the Youtube dataset.
	}
\end{figure}

\begin{figure}[h]
	\centering
	\includegraphics[width=\linewidth]{FIG/tradeoff_label.pdf} \\
	\begin{tabular}{ccccc}
		\begin{minipage}{0.18\textwidth}
			\center
			\includegraphics[width=\linewidth]{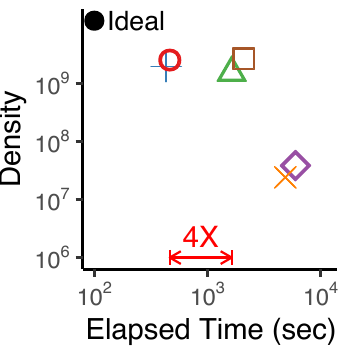}
		\end{minipage} 
		& \begin{minipage}{.18\textwidth}
			\center
			\includegraphics[width=\linewidth]{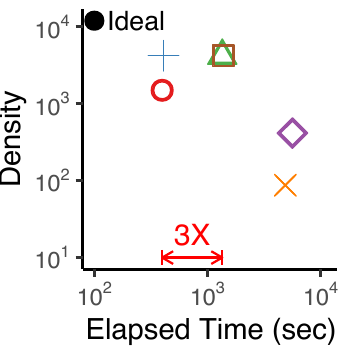}
		\end{minipage} 
		& \begin{minipage}{.18\textwidth}
			\center
			\includegraphics[width=\linewidth]{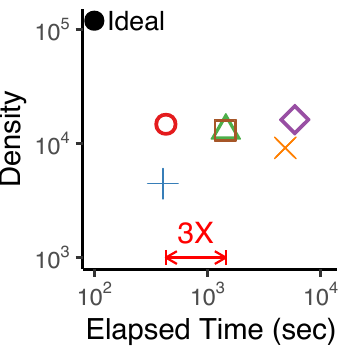}
		\end{minipage} 
		& \begin{minipage}{.18\textwidth}
			\center
			\includegraphics[width=\linewidth]{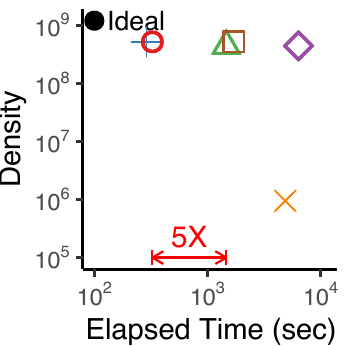}
		\end{minipage} 
		& \begin{minipage}{.18\textwidth}
			\center
			\includegraphics[width=\linewidth]{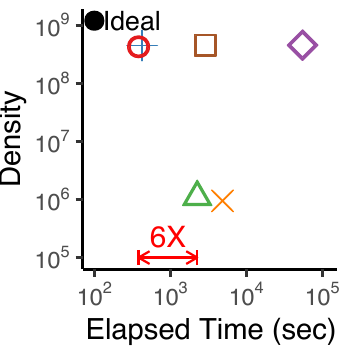}
		\end{minipage} \vspace{1mm} \\
		A. $\densitysuspnoarg$ & B. $\densityarinoarg$ & C. $\densitygeonoarg$ & D. $\densitysurpalpha{1}$ & E. $\densitysurpalpha{10}$
	\end{tabular}
	\caption{\label{fig:tradeoff:enwiki}
		Speed and accuracy of the algorithms in the EnWiki dataset.
	}
\end{figure}

\begin{figure}[h]
	\centering
	\includegraphics[width=\linewidth]{FIG/tradeoff_label.pdf} \\
	\begin{tabular}{ccccc}
		\begin{minipage}{0.18\textwidth}
			\center
			\includegraphics[width=\linewidth]{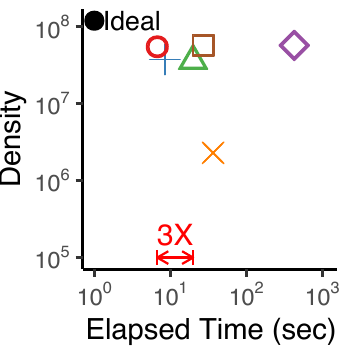}
		\end{minipage} 
		& \begin{minipage}{.18\textwidth}
			\center
			\includegraphics[width=\linewidth]{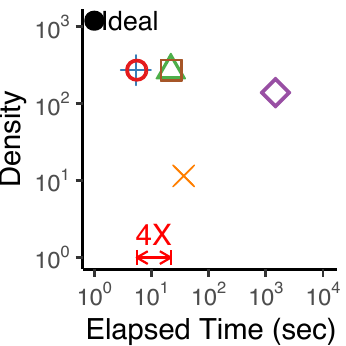}
		\end{minipage} 
		& \begin{minipage}{.18\textwidth}
			\center
			\includegraphics[width=\linewidth]{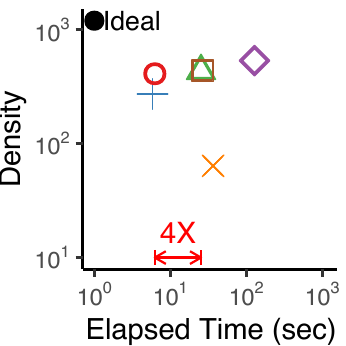}
		\end{minipage} 
		& \begin{minipage}{.18\textwidth}
			\center
			\includegraphics[width=\linewidth]{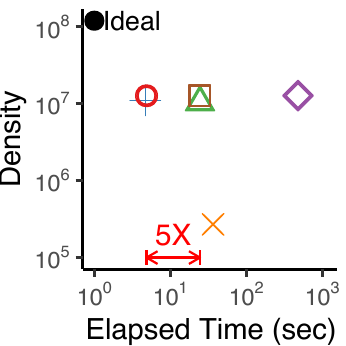}
		\end{minipage} 
		& \begin{minipage}{.18\textwidth}
			\center
			\includegraphics[width=\linewidth]{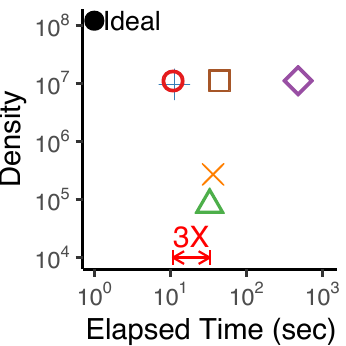}
		\end{minipage} \vspace{1mm} \\
		A. $\densitysuspnoarg$ & B. $\densityarinoarg$ & C. $\densitygeonoarg$ & D. $\densitysurpalpha{1}$ & E. $\densitysurpalpha{10}$
	\end{tabular}
	\caption{\label{fig:tradeoff:kowiki}
		Speed and accuracy of the algorithms in the KoWiki dataset.
	}
\end{figure}

\begin{figure}[h]
	\centering
	\includegraphics[width=\linewidth]{FIG/tradeoff_label.pdf} \\
	\begin{tabular}{ccccc}
		\begin{minipage}{0.18\textwidth}
			\center
			\includegraphics[width=\linewidth]{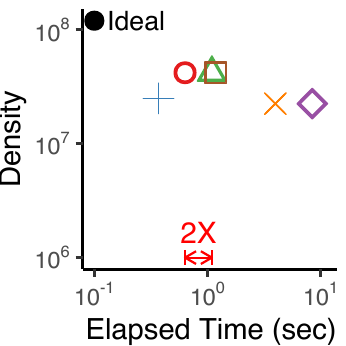}
		\end{minipage} 
		& \begin{minipage}{.18\textwidth}
			\center
			\includegraphics[width=\linewidth]{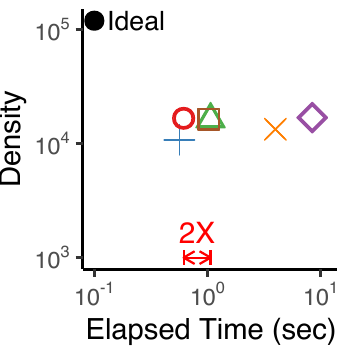}
		\end{minipage} 
		& \begin{minipage}{.18\textwidth}
			\center
			\includegraphics[width=\linewidth]{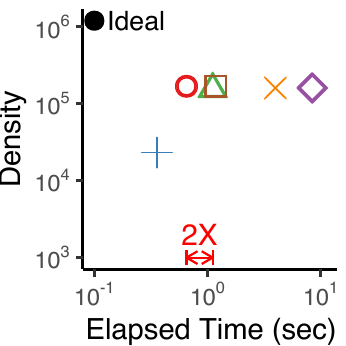}
		\end{minipage} 
		& \begin{minipage}{.18\textwidth}
			\center
			\includegraphics[width=\linewidth]{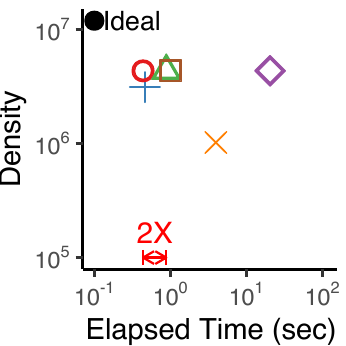}
		\end{minipage} 
		& \begin{minipage}{.18\textwidth}
			\center
			\includegraphics[width=\linewidth]{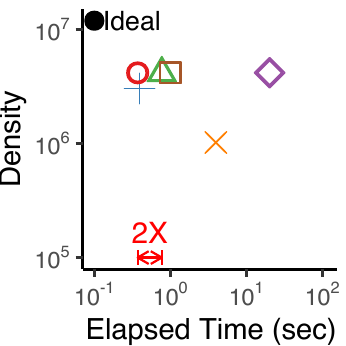}
		\end{minipage} \vspace{1mm} \\
		A. $\densitysuspnoarg$ & B. $\densityarinoarg$ & C. $\densitygeonoarg$ & D. $\densitysurpalpha{1}$ & E. $\densitysurpalpha{10}$
	\end{tabular}
	\caption{\label{fig:tradeoff:darpa}
		Speed and accuracy of the algorithms in the DARPA dataset.
	}
\end{figure}

\begin{figure}[h]
	\centering
	\includegraphics[width=\linewidth]{FIG/tradeoff_label.pdf} \\
	\begin{tabular}{ccccc}
		\begin{minipage}{0.18\textwidth}
			\center
			\includegraphics[width=\linewidth]{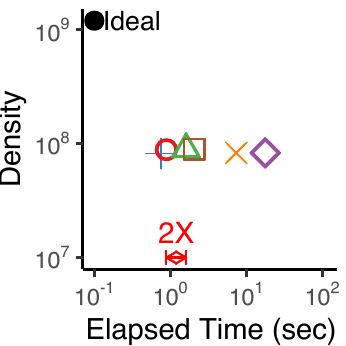}
		\end{minipage} 
		& \begin{minipage}{.18\textwidth}
			\center
			\includegraphics[width=\linewidth]{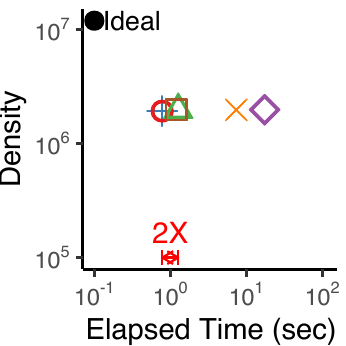}
		\end{minipage} 
		& \begin{minipage}{.18\textwidth}
			\center
			\includegraphics[width=\linewidth]{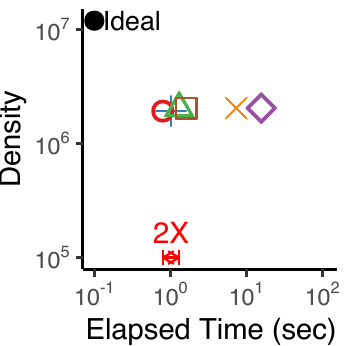}
		\end{minipage} 
		& \begin{minipage}{.18\textwidth}
			\center
			\includegraphics[width=\linewidth]{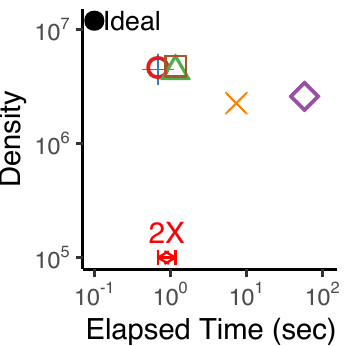}
		\end{minipage} 
		& \begin{minipage}{.18\textwidth}
			\center
			\includegraphics[width=\linewidth]{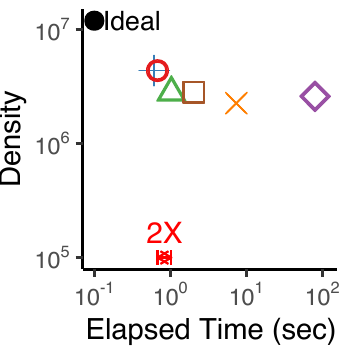}
		\end{minipage} \vspace{1mm} \\
		A. $\densitysuspnoarg$ & B. $\densityarinoarg$ & C. $\densitygeonoarg$ & D. $\densitysurpalpha{1}$ & E. $\densitysurpalpha{10}$
	\end{tabular}
	\caption{\label{fig:tradeoff:intrusion}
		Speed and accuracy of the algorithms in the AirForce dataset.
	}
\end{figure}

\begin{figure}[h]
	\centering
	\includegraphics[width=\linewidth]{FIG/tradeoff_label.pdf} \\
	\begin{tabular}{ccccc}
		\begin{minipage}{0.18\textwidth}
			\center
			\includegraphics[width=\linewidth]{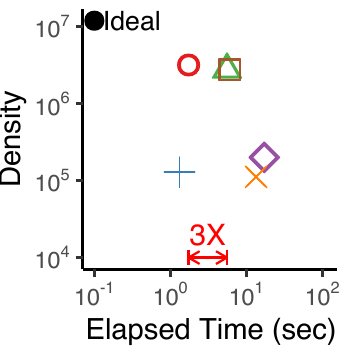}
		\end{minipage} 
		& \begin{minipage}{.18\textwidth}
			\center
			\includegraphics[width=\linewidth]{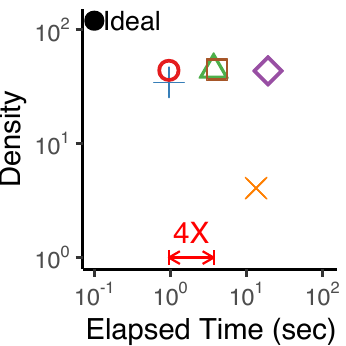}
		\end{minipage} 
		& \begin{minipage}{.18\textwidth}
			\center
			\includegraphics[width=\linewidth]{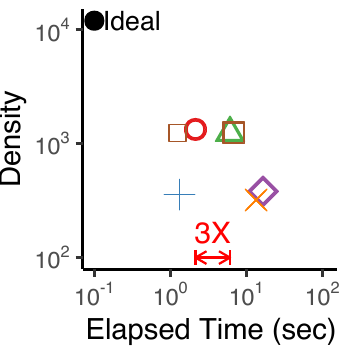}
		\end{minipage} 
		& \begin{minipage}{.18\textwidth}
			\center
			\includegraphics[width=\linewidth]{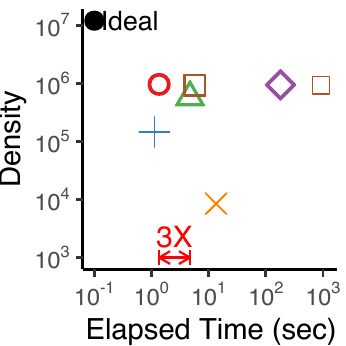}
		\end{minipage} 
		& \begin{minipage}{.18\textwidth}
			\center
			\includegraphics[width=\linewidth]{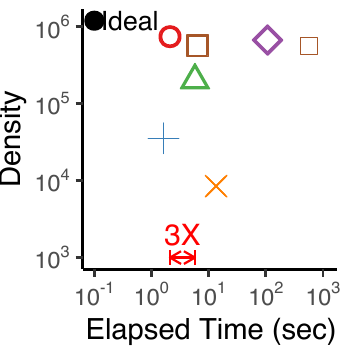}
		\end{minipage} \vspace{1mm} \\
		A. $\densitysuspnoarg$ & B. $\densityarinoarg$ & C. $\densitygeonoarg$ & D. $\densitysurpalpha{1}$ & E. $\densitysurpalpha{10}$
	\end{tabular}
	\caption{\label{fig:tradeoff:itunes}
		Speed and accuracy of the algorithms in the SWM dataset.
	}
\end{figure}

\begin{figure}[h]
	\centering
	\includegraphics[width=\linewidth]{FIG/tradeoff_label.pdf} \\
	\begin{tabular}{ccccc}
		\begin{minipage}{0.18\textwidth}
			\center
			\includegraphics[width=\linewidth]{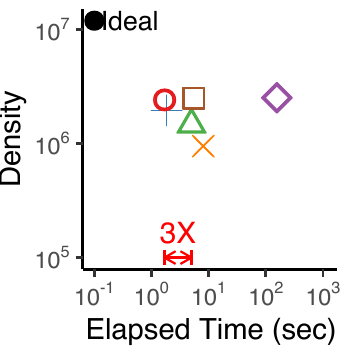}
		\end{minipage} 
		& \begin{minipage}{.18\textwidth}
			\center
			\includegraphics[width=\linewidth]{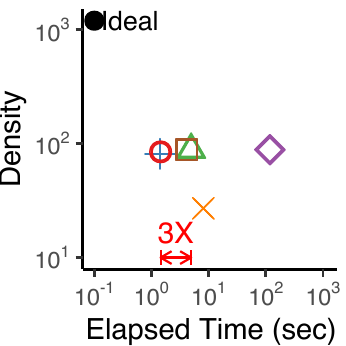}
		\end{minipage} 
		& \begin{minipage}{.18\textwidth}
			\center
			\includegraphics[width=\linewidth]{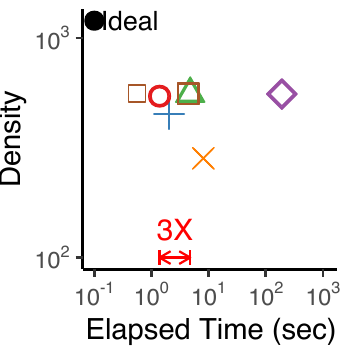}
		\end{minipage} 
		& \begin{minipage}{.18\textwidth}
			\center
			\includegraphics[width=\linewidth]{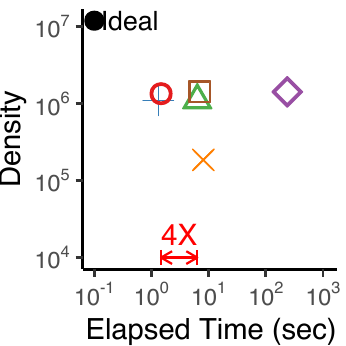}
		\end{minipage} 
		& \begin{minipage}{.18\textwidth}
			\center
			\includegraphics[width=\linewidth]{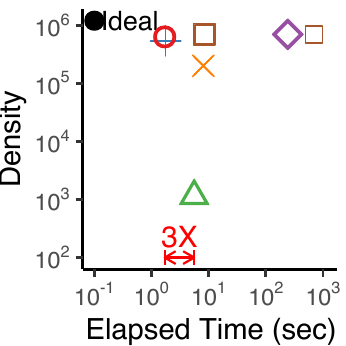}
		\end{minipage} \vspace{1mm} \\
		A. $\densitysuspnoarg$ & B. $\densityarinoarg$ & C. $\densitygeonoarg$ & D. $\densitysurpalpha{1}$ & E. $\densitysurpalpha{10}$
	\end{tabular}
	\caption{\label{fig:tradeoff:yelp}
		Speed and accuracy of the algorithms in the Yelp dataset.
	}
\end{figure}

\begin{figure}[h]
	\centering
	\includegraphics[width=\linewidth]{FIG/tradeoff_label.pdf} \\
	\begin{tabular}{ccccc}
		\begin{minipage}{0.18\textwidth}
			\center
			\includegraphics[width=\linewidth]{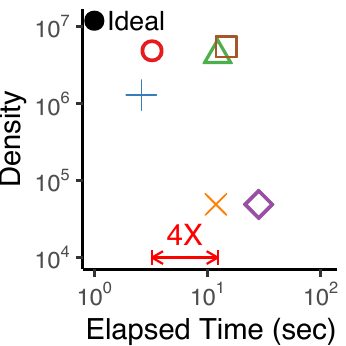}
		\end{minipage} 
		& \begin{minipage}{.18\textwidth}
			\center
			\includegraphics[width=\linewidth]{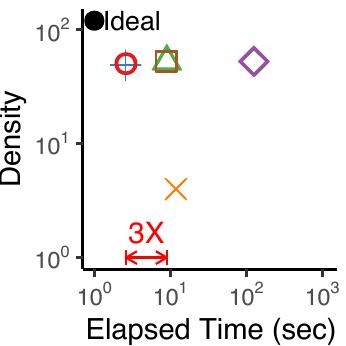}
		\end{minipage} 
		& \begin{minipage}{.18\textwidth}
			\center
			\includegraphics[width=\linewidth]{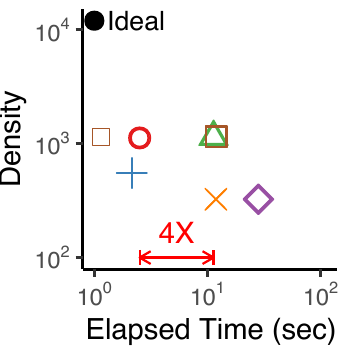}
		\end{minipage} 
		& \begin{minipage}{.18\textwidth}
			\center
			\includegraphics[width=\linewidth]{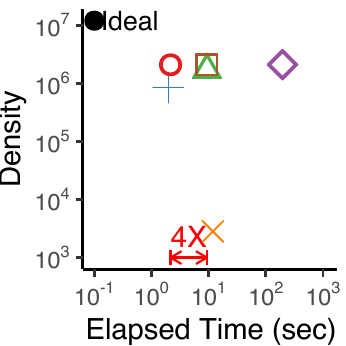}
		\end{minipage} 
		& \begin{minipage}{.18\textwidth}
			\center
			\includegraphics[width=\linewidth]{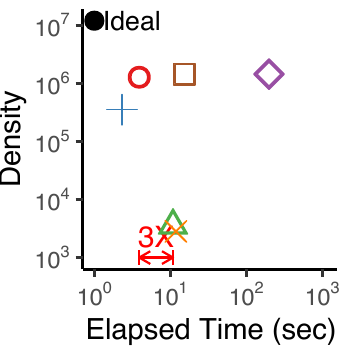}
		\end{minipage} \vspace{1mm} \\
		A. $\densitysuspnoarg$ & B. $\densityarinoarg$ & C. $\densitygeonoarg$ & D. $\densitysurpalpha{1}$ & E. $\densitysurpalpha{10}$
	\end{tabular}
	\caption{\label{fig:tradeoff:android}
		Speed and accuracy of the algorithms in the Android dataset.
	}
\end{figure}

\begin{figure}[h]
	\centering
	\includegraphics[width=\linewidth]{FIG/tradeoff_label.pdf} \\
	\begin{tabular}{ccccc}
		\begin{minipage}{0.18\textwidth}
			\center
			\includegraphics[width=\linewidth]{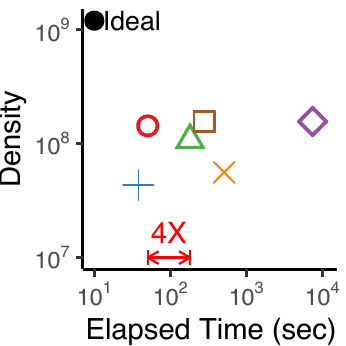}
		\end{minipage} 
		& \begin{minipage}{.18\textwidth}
			\center
			\includegraphics[width=\linewidth]{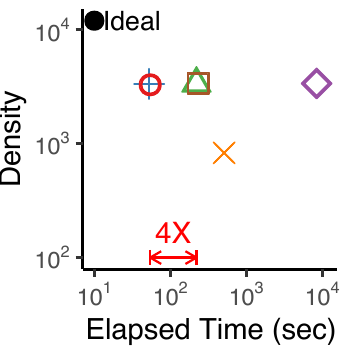}
		\end{minipage} 
		& \begin{minipage}{.18\textwidth}
			\center
			\includegraphics[width=\linewidth]{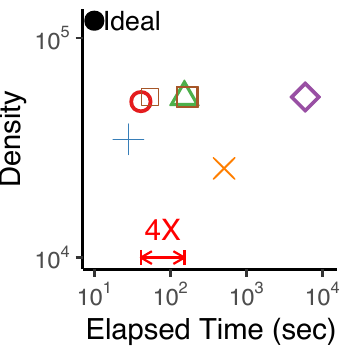}
		\end{minipage} 
		& \begin{minipage}{.18\textwidth}
			\center
			\includegraphics[width=\linewidth]{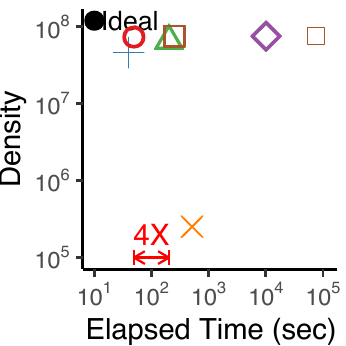}
		\end{minipage} 
		& \begin{minipage}{.18\textwidth}
			\center
			\includegraphics[width=\linewidth]{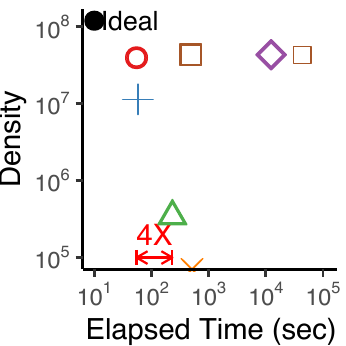}
		\end{minipage} \vspace{1mm} \\
		A. $\densitysuspnoarg$ & B. $\densityarinoarg$ & C. $\densitygeonoarg$ & D. $\densitysurpalpha{1}$ & E. $\densitysurpalpha{10}$
	\end{tabular}
	\caption{\label{fig:tradeoff:netflix}
		Speed and accuracy of the algorithms in the Netflix dataset.
	}
\end{figure}

\begin{figure}[h]
	\centering
	\includegraphics[width=\linewidth]{FIG/tradeoff_label.pdf} \\
	\begin{tabular}{ccccc}
		\begin{minipage}{0.18\textwidth}
			\center
			\includegraphics[width=\linewidth]{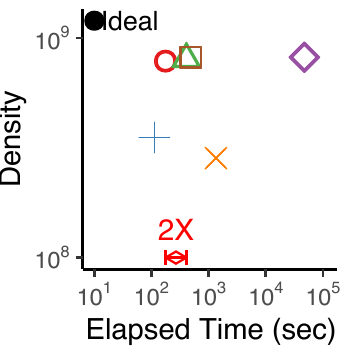}
		\end{minipage} 
		& \begin{minipage}{.18\textwidth}
			\center
			\includegraphics[width=\linewidth]{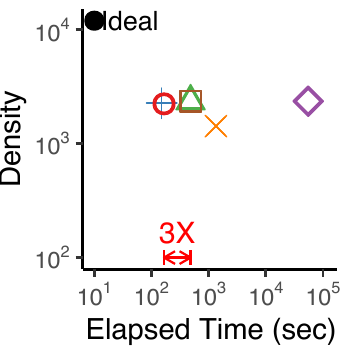}
		\end{minipage} 
		& \begin{minipage}{.18\textwidth}
			\center
			\includegraphics[width=\linewidth]{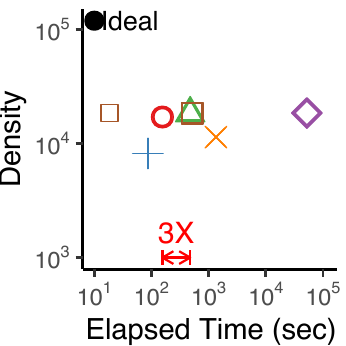}
		\end{minipage} 
		& \begin{minipage}{.18\textwidth}
			\center
			\includegraphics[width=\linewidth]{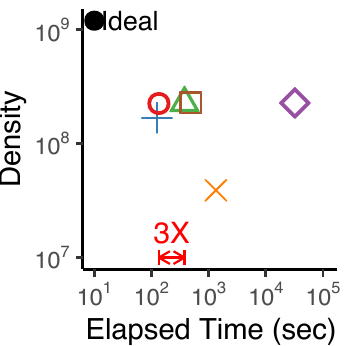}
		\end{minipage} 
		& \begin{minipage}{.18\textwidth}
			\center
			\includegraphics[width=\linewidth]{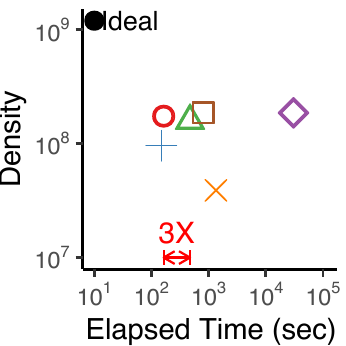}
		\end{minipage} \vspace{1mm} \\
		A. $\densitysuspnoarg$ & B. $\densityarinoarg$ & C. $\densitygeonoarg$ & D. $\densitysurpalpha{1}$ & E. $\densitysurpalpha{10}$
	\end{tabular}
	\caption{\label{fig:tradeoff:yahoo}
		Speed and accuracy of the algorithms in the YahooM. dataset.
	}
\end{figure}

\end{document}